\documentclass[sigconf]{acmart}
\AtBeginDocument{%
  }

\usepackage{bm}
\usepackage{algorithm, pseudocode}

\def\G {\ensuremath{\mathbf{G}}}

\def\K {\ensuremath{\mathbb{K}}}

\def\kbar {\ensuremath{{\overline{k}}}}
\def\qbar {\ensuremath{{\overline{q}}}}
\def\wbar {\ensuremath{{\overline{w}}}}

\def\calA {\ensuremath{\mathcal{A}}}
\def\calB {\ensuremath{\mathcal{B}}}
\def\calC {\ensuremath{\mathcal{C}}}

\def\scrR{\ensuremath{\mathscr{R}}}
\def\scrS{\ensuremath{\mathscr{S}}}

\DeclareBoldMathCommand{\A}{A}
\DeclareBoldMathCommand{\B}{B}
\DeclareBoldMathCommand{\C}{C}
\DeclareBoldMathCommand{\F}{F}
\DeclareBoldMathCommand{\G}{G}
\DeclareBoldMathCommand{\H}{H}
\DeclareBoldMathCommand{\J}{J}
\DeclareBoldMathCommand{\M}{M}
\DeclareBoldMathCommand{\L}{L}
\DeclareBoldMathCommand{\P}{P}
\DeclareBoldMathCommand{\Q}{Q}
\DeclareBoldMathCommand{\R}{R}
\DeclareBoldMathCommand{\V}{V}
\DeclareBoldMathCommand{\W}{W}
\DeclareBoldMathCommand{\U}{U}
\DeclareBoldMathCommand{\T}{T}
\DeclareBoldMathCommand{\X}{X}
\DeclareBoldMathCommand{\Z}{Z}
\DeclareBoldMathCommand{\Y}{Y}

\DeclareBoldMathCommand{\a}{a}
\DeclareBoldMathCommand{\e}{e}
\DeclareBoldMathCommand{\d}{d}
\DeclareBoldMathCommand{\f}{f}
\DeclareBoldMathCommand{\g}{g}
\DeclareBoldMathCommand{\h}{h}
\DeclareBoldMathCommand{\i}{i}
\DeclareBoldMathCommand{\j}{j}
\DeclareBoldMathCommand{\k}{k}
\DeclareBoldMathCommand{\p}{p}
\DeclareBoldMathCommand{\q}{q}
\DeclareBoldMathCommand{\t}{t}
\DeclareBoldMathCommand{\u}{u}
\DeclareBoldMathCommand{\w}{w}
\DeclareBoldMathCommand{\x}{x}
\DeclareBoldMathCommand{\y}{y}
\DeclareBoldMathCommand{\m}{m}
\DeclareBoldMathCommand{\r}{r}
\DeclareBoldMathCommand{\z}{z}
\DeclareBoldMathCommand{\v}{v}

\DeclareBoldMathCommand{\bell}{\ell}
\DeclareBoldMathCommand{\bcalC}{\calC}
\DeclareBoldMathCommand{\bcalA}{\calA}
\DeclareBoldMathCommand{\bcalB}{\calB}

\def\rho {\ensuremath{\varepsilon}}

\def\softO{\ensuremath{{O}{\,\tilde{ }\,}}}

\DeclareBoldMathCommand{\bcardA}{\mathcal{A}}
\DeclareBoldMathCommand{\bcardB}{\mathcal{B}}

\setcopyright{acmlicensed}
\copyrightyear{2026}
\acmYear{2026}
\setcopyright{cc}
\setcctype{by}
\acmConference[ISSAC '26]{51st International Symposium on Symbolic and Algebraic Computation}{July 13--17, 2026}{Oldenburg, Germany}
\acmBooktitle{51st International Symposium on Symbolic and Algebraic Computation (ISSAC '26), July 13--17, 2026, Oldenburg, Germany}
\acmDOI{10.1145/3815436.3815453}
\acmISBN{979-8-4007-2595-1/2026/07}

\begin{document}


\title{A Symbolic Homotopy Algorithm for Solving Composable Polynomial Systems}


\author{Thi Xuan Vu}
\email{thi-xuan.vu@univ-lille.fr}
\orcid{0000-0002-2285-7801}
\affiliation{%
  \institution{Univ. Lille, CNRS, Centrale Lille, UMR 9189 CRIStAL}
  \postcode{F-59000}
  \city{Lille}
  \country{France}
}

\renewcommand{\shortauthors}{Thi Xuan Vu}

\begin{abstract}
  A clear and well-documented \LaTeX\ document is presented as an
  article formatted for publication by ACM in a conference proceedings
  or journal publication. Based on the ``acmart'' document class, this
  article presents and explains many of the common variations, as well
  as many of the formatting elements an author may use in the
  preparation of the documentation of their work.
\end{abstract}

\begin{abstract}
  We study the problem of computing the isolated regular solutions of a system
\((f_1,\ldots,f_n)\) of \(n\) polynomial equations in \(n\) variables \((X_1, \dots, X_n)\) over a field of characteristic zero \(k\). 
We focus on systems with a \emph{composable structure}, where each polynomial \(f_i\) can be expressed as a composition
\(
f_i = h_i(g_1,\dots,g_n).
\) 
Exploiting this structure allows us to reduce the original system to one in the \(g_j\) variables, 
thereby significantly improving the efficiency of symbolic solution algorithms. 
We present a probabilistic algorithm that computes all isolated regular solutions, with arithmetic complexity being polynomial in the input size and in the number of solutions. 

A first important application is when \(f_1, \dots, f_n\) belong to the subring \(k[g_1, \dots, g_n]\), where \(g_1, \dots, g_n\) are algebraically independent polynomials in \(k[X_1, \dots, X_n]\). 
Another important application is to systems of invariant polynomials under finite reflection groups, 
since by the Chevalley-Shephard-Todd theorem their invariant rings are polynomial algebras. 
Typical examples include the symmetric groups \(S_n\), the hyperoctahedral groups \(B_n\), 
the dihedral groups \(I_2(m)\), and the exceptional finite reflection groups \(E_6, E_7, E_8, F_4, H_3, H_4\).
\end{abstract}


\begin{CCSXML}
<ccs2012>
   <concept>
       <concept_id>10003752.10003809</concept_id>
       <concept_desc>Theory of computation~Design and analysis of algorithms</concept_desc>
       <concept_significance>500</concept_significance>
       </concept>
   <concept>
       <concept_id>10010147.10010148.10010149.10010150</concept_id>
       <concept_desc>Computing methodologies~Algebraic algorithms</concept_desc>
       <concept_significance>500</concept_significance>
       </concept>
 </ccs2012>
\end{CCSXML}

\ccsdesc[500]{Theory of computation~Design and analysis of algorithms}
\ccsdesc[500]{Computing methodologies~Algebraic algorithms}

\keywords{Polynomial system solving, 
composable polynomials, 
homotopy continuation methods,  
algebraically independent generators, 
symmetric and reflection groups}


\maketitle

\section{Introduction}

Solving systems of polynomial equations is a central problem in computer algebra and symbolic computation, with applications ranging from algebraic geometry and invariant theory to optimization and real algebraic geometry.
Given a system
\[
\f = (f_1,\dots,f_n) \subset k[X_1,\dots,X_n],
\] where \(k\) is a field, 
solving $\f = 0$ can be interpreted in different ways. From a numerical
point of view, it means approximating the isolated points of the variety
defined by the ideal generated by $\f$ while from a symbolic computation
 perspective, it means providing an exact description, for instance through zero-dimensional parametrizations such as rational univariate representations or geometric resolutions, 
of the solution set that allows one to extract further algebraic or
geometric information efficiently.  
In general, polynomial system solving, which is known to be NP-hard \cite{fraenkel1979complexity, garey2002computers}, is a difficult problem. In this
paper, we focus on the computer algebra setting and consider
systems with additional algebraic structure, which allows for more
efficient algorithms.

\subsubsection*{Our problem.} In many applications, polynomial systems exhibit additional algebraic structure that can be exploited to design faster algorithms. In this work we focus on \emph{composable polynomial systems}, where each equation can be written as
\[
f_i(\X)=h_i\big(g_1(\X),\dots,g_n(\X)\big), \qquad i=1,\dots,n,
\]
for polynomial maps $\g=(g_1,\dots,g_n)$ and $\h=(h_1,\dots,h_n)$.

By exploiting the compositional structure, instead of solving $\f(\X)=0$ directly, whose complexity depends on the degrees of the composed polynomials $f_i$, one may first solve the \emph{outer system}
\(
\h(\Y)=0
\)
in new variables $\Y$, and then lift its solutions through the \emph{inner map} $\g$.
Since the degrees of $\h$ and $\g$ are often much smaller than those of $\f$, this approach can significantly reduce the computational cost.


\subsubsection*{Applications.}
An important special case occurs when $f_1, \dots, f_n$ belong to
the subring $k[g_1, \dots, g_n]$, where $g_1, \dots, g_n$ are algebraically
independent. In this situation, each $f_i$ admits a unique representation
$f_i = h_i(g_1, \dots, g_n)$ for some polynomial
$h_i \in k[Y_1, \dots, Y_n]$, and the composable framework applies directly.

Another natural class of examples arises in invariant theory. Let
$G \subset \mathrm{GL}_n(k)$ be a finite reflection group. By the
Chevalley--Shephard--Todd theorem \cite{chevalley1955invariants, sturmfels2002solving}, the invariant ring
$k[X_1,\dots,X_n]^G$ is a polynomial algebra of the form
$k[g_1, \dots, g_n]$. Consequently, every $G$-invariant polynomial $f_i$
can be written as a composition $f_i = h_i(g_1, \dots, g_n)$.
Typical examples include the symmetric groups $S_n$, the hyperoctahedral
groups $B_n$, the dihedral groups $I_2(m)$, and the exceptional reflection
groups $E_6, E_7, E_8, F_4, H_3,$ and $H_4$. Exploiting this structure can reduce the number of points to compute and the overall arithmetic cost of solving $\f = 0$.

This approach has been studied in various contexts. Classical invariant-theoretic algorithms are discussed in \cite{sturmfels1993algorithms, derksen2002computational}, while applications to reflection groups appear in \cite{lehrer2009unitary}. It has also been used to develop fast algorithms for solving invariant polynomial systems, including the connectivity of semi-algebraic sets \cite{riener2024connectivity, riener2025deciding}, emptiness of algebraic sets \cite{labahn2023faster}, computing critical points of polynomial maps \cite{vu2022computing, faugere2023computing}, polynomial optimization \cite{riener2013exploiting}, and the Euler-Poincar\'e characteristic \cite{basu2017efficient}.

In~\cite{vu2025computing}, we developed a randomized algorithm that,
given polynomial maps $\g = (g_1,\dots,g_n)$ and $\f = (f_1,\dots,f_n)$,
computes polynomials $\h = (h_1,\dots,h_n) \in k[Y_1,\dots,Y_n]$ such that
$\f = \h \circ \g$, that is, each $f_i$ is expressed as a polynomial in
$g_1,\dots,g_n$. The arithmetic complexity is softly polynomial in the
sizes of the straight-line programs computing $\f$ and $\g$, as well as in
the degree bound of the representation. In the important case where $\f$
is invariant under a finite reflection group whose invariant ring is
generated by $g_1,\dots,g_n$, the degrees of the $h_i$ are bounded by
$\max_i \deg(f_i)$, leading to particularly efficient computations.

Therefore, for these structured classes of systems, one can first apply
the representation algorithm of~\cite{vu2025computing} to compute the outer
map $\h$ from $\f$ and $\g$, and then apply the compositional solving
algorithm developed in this paper to the pair $(\g,\h)$. This yields an
efficient procedure for computing the solutions of the original system
$\f = 0$.


\subsubsection*{Contribution}

We present a probabilistic symbolic algorithm for computing all isolated regular solutions of composable systems $\f=\h\circ\g$.
Our method combines geometric resolutions with a global Newton-Hensel (homotopy) lifting procedure.
First, we compute a zero-dimensional parametrization of the regular solutions of the outer system $\h=0$ using symbolic homotopy techniques.
Then, we lift these solutions through the inner map $\g$ by a parametric Newton-Hensel scheme, and finally discard points where the Jacobian of $\g$ is singular.
All computations are performed using straight-line programs, ensuring complexity bounds polynomial in the input size.

The resulting algorithm computes a geometric resolution of the isolated regular solutions of $\f=0$ with arithmetic complexity
softly polynomial in the straight-line program sizes of $\h$ and $\g$ and in the quantities
\[
C=\prod_i \deg(h_i), \qquad
D=\prod_i \deg(g_i),
\]
which bound respectively the number of solutions of the outer and inner systems.
In contrast, solving the composed system directly would typically depend on the much larger Bézout bound $\prod_i \deg(f_i)$.
Thus, our complexity depends separately on the inner and outer maps rather than on their composition, yielding substantial savings for structured systems.

While homotopy lifting and geometric resolutions are classical tools, their combination with compositional structure has not been previously exploited to obtain complexity bounds depending separately on the inner and outer maps. The main novelty of this work lies in this structured decomposition and the resulting reduction from Bézout-type bounds to \(CD\).

\begin{theorem}
Let $k$ be a field of characteristic zero or sufficiently large positive characteristic. Let 
$\f = (f_1, \dots, f_n)$ be polynomials in $k[X_1, \dots, X_n]$, and assume that each $f_i$ is composable as $f_i = h_i(g_1, \dots, g_n)$ for $i = 1, \dots, n$. 

Then there exists a randomized algorithm 
\({\sf Solve\_h\_circ\_g}\) that takes $\h = (h_1, \dots, h_n)$ and $\g = (g_1, \dots, g_n)$ as input and outputs 
a zero-dimensional parametrization of the isolated regular points of $\f = 0$. 
The complexity of this algorithm is
\[
\softO\!\big( n (L_\h + L_\g + n^2(\gamma + \sigma)) (C E + D J + D C) \big)
\] 
operations in $k$, where \newline
- $L_\h$ and $L_\g$ are respectively the lengths of the straight-line programs computing $\h$ and $\g$,\newline
- $C = \deg(h_1) \cdots \deg(h_n)$, $E = (\deg(h_1)+1) \cdots (\deg(h_n)+1)$, \newline 
- $D = \deg(g_1) \cdots \deg(g_n)$, $J = (\deg(g_1)+1) \cdots (\deg(g_n)+1)$, \newline
- $\gamma = \max_{1 \le i \le n} \deg(h_i)$, and $\sigma = \max_{1 \le i \le n} \deg(g_i)$. 
\end{theorem}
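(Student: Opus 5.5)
The algorithm has three stages, and I will bound the cost of each separately before summing.

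\textbf{Stage 1: solve the outer system.} I would first compute a zero-dimensional parametrization of the isolated regular solutions of the outer system $\h(\Y)=0$. For this I would use the classical symbolic homotopy of total-degree type. Start from the system $Y_i^{\deg h_i}-1$, whose $C$ roots are explicitly known and regular. Deform it to $\h$ via $(1-t)(Y_i^{\deg h_i}-1)+t\,h_i$. Then apply global Newton--Hensel lifting over $k[[t]]$, followed by a rational reconstruction in $t$ and a specialization at $t=1$, keeping only the points where $\det \mathrm{Jac}(\h)$ does not vanish. The standard homotopy complexity bound (a Heintz--Krick--Puddu--Sabia--Waissbein type estimate) gives $\softO(n(L_\h+n^2\gamma)\,C E)$ operations. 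Here the factor $E$ enters as a bound on the degree of the homotopy curve in $t$, coming from the multihomogeneous/Newton-polytope degree of the deformed system.

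\textbf{Stage 2: lift through the inner map.} The output of Stage 1 is a parametrization $Q(T)=0$, $Y=V(T)$ with $\deg Q\le C$. I would then solve $\g(\X)=\V(T)$ over the product of fields $A=k[T]/Q(T)$, using a second homotopy with $A$ in place of $k$. Start from $X_i^{\deg g_i}-c_i$ for random $c_i$, deform linearly to $\g-\V(T)$, and lift the $D$ start roots. All arithmetic now takes place in $A$, which costs $\softO(C)$ per operation; alternatively one may view this as a bivariate parametrization with $\deg_T<C$ over the $D$ sheets. This yields $\softO(n(L_\g+n^2\sigma)\,D J\,C)$. I would then convert the resulting parametrization over $A$ into one over $k$ by a random primitive element, taking the norm and change of coordinates. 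This conversion costs $\softO(DC)$ times the evaluation costs, which accounts for the $DC$ term.

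\textbf{Stage 3: clean the output.} Finally I would discard the points where $\det \mathrm{Jac}(\g)$ vanishes, by evaluating it as a straight-line program of length $O(L_\g+n^3)$ modulo the final minimal polynomial and taking a gcd. By the chain rule, $\mathrm{Jac}(\f)=\mathrm{Jac}(\h)(\g)\,\mathrm{Jac}(\g)$. So the isolated regular points of $\f$ are exactly the points $x$ with $\g(x)$ a regular root of $\h$ and $\mathrm{Jac}(\g)(x)$ invertible, which proves correctness. Summing the three stages gives the stated bound, with $n(L_\h+L_\g+n^2(\gamma+\sigma))$ dominating each per-sheet lifting cost.

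\textbf{Main obstacle.} I expect the hardest step to be Stage 2, where the homotopy runs over the non-field algebra $A$. Two things need care there: keeping the lifting regular along all sheets simultaneously, and justifying the randomness. For the randomness, I would argue through a Zariski-open condition on the $c_i$ and on the primitive element, bounded via a Schwartz--Zippel count. This count requires $k$ to have characteristic zero or sufficiently large characteristic, which matches the hypothesis of the theorem.
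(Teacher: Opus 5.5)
\textbf{There is a genuine gap, in Stage~2 and in your final claim that ``summing the three stages gives the stated bound''.} Your Stages~1 and~3 are essentially the paper's: a total-degree homotopy for $\h$ with a different start system, and a final removal of singular points. For the latter, your chain-rule identity $\mathrm{Jac}(\f)=\mathrm{Jac}(\h)(\g)\,\mathrm{Jac}(\g)$ is the right justification.

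By your own count, running the $\g$-homotopy over $A=k[T]/(q_h)$ costs $\softO(n(L_\g+n^2\sigma)\,DJC)$. The $t$-adic precision $J$ of the inner homotopy curve is paid on all $D$ sheets, and every coefficient lives in $A$. A product $DJC$ is not contained in $O(CE+DJ+DC)$. For example, with all $\deg h_i=\deg g_i=2$ you have $C=D=2^n$ and $E=J=3^n$, so $DJC=12^n$ whereas $CE+DJ+DC=2\cdot 6^n+4^n$. This is structural rather than a bookkeeping slip: any homotopy in an auxiliary parameter $t$ carried out over $A$ multiplies the inner precision by $\deg q_h$. A secondary issue, which you flagged yourself, also remains unresolved. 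Rational reconstruction in $t$ and extraction of a minimal polynomial over the non-field $A$ require dynamic splitting of $q_h$, because the homotopy-curve degree and the number of regular endpoints can differ from one root of $q_h$ to another.

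The missing idea, which the paper's \textsf{Parametric} procedure supplies, is to decouple $J$ from $C$ by never running the inner homotopy over $A$. Pick one random $s_0\in k$ and compute, over $k$, a geometric resolution of the single fiber $\g(\X)=\v(s_0)$ with \textsf{Homotopy\_Nonsingular}; this is the only place $DJ$ is paid. Then treat $\g(\X)-\v(T)$ as a one-parameter system and apply global Newton--Hensel lifting in $S=T-s_0$ to the $D$ regular points of that fiber. This costs $D$ times the $T$-adic precision, not $DJ$ times it. After rational reconstruction you get $Q(U,T)$ and $V_i(U,T)$. Only at the very end is $q_h(T)=0$ imposed, through a bivariate geometric resolution of $Q(U,T)=q_h(T)=0$, substitution into the $V_i$, and the \textsf{Remove} step. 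This is how the additive term $D(J+C)$ arises, and it also avoids arithmetic over a product of fields.

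The paper takes the lifting precision to be $2C$, on the grounds that $\deg q_h\le C$. If you adopt this route, you should check that the precision really dominates $\deg_T Q$. That degree is governed by the degree of the curve $\g(\X)=\v(T)$, not by $C$ alone, so this step needs its own justification.
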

In this paper, by a randomized (or probabilistic) algorithm, we mean an algorithm that performs correct computations by making random choices of points lying in a certain non-empty Zariski open subset of appropriate affine spaces.
In this sense, the algorithm is of Monte Carlo type, i.e., it returns the correct output with high probability, at least some fixed constant greater than \(1/2\). By repeating the algorithm multiple times, the error probability can be made arbitrarily small; this can be quantified using the Schwartz-Zippel lemma (see e.g. \cite{schwartz1980fast} or \cite[Lemma 6.44]{vonzurGathenGerhard2013}).


\subsubsection*{Structure of the paper}
The paper is organized as follows.
Section \ref{sec:pri} introduces the algebraic and algorithmic preliminaries used throughout the paper, including complexity notations, straight-line program representations, and zero-dimensional parametrizations such as rational univariate representations and geometric resolutions.
Section \ref{sec_glob} reviews the global Newton-Hensel lifting technique, which serves as the main symbolic continuation tool for refining parametrizations along a homotopy.
Section \ref{sec:core} presents the core contributions of the paper: we first develop a symbolic homotopy algorithm for computing the regular solutions of a square polynomial system in the outer variables, and then exploit the compositional structure $\f=\h\circ\g$ to lift these solutions through the inner map.
These ingredients are combined into the complete solving procedure together with a detailed complexity analysis, yielding bounds that depend separately on the degrees and straight-line program sizes of $\h$ and $\g$.
Examples illustrate how exploiting this structure leads to significant improvements compared to solving the system  \(\f\) directly.


\section{Preliminaries}
\label{sec:pri}
\subsection{Complexity Notations} The following notation will be used to measure the complexities of our algorithms.  The notation \(f \in \softO(g)\) means that there exists a constant \(a\) such that \(f\) is in \(O(g \log(g)^a)\).  

We denote by $\omega$ the matrix multiplication exponent. Currently, $\omega < 2.373$. In complexity estimates, terms of the form $O(n^\omega)$ account for the cost of  linear algebra operations such as matrix inversion or solving linear systems.

For positive integers \(d\), let ${\sf M}(d)$ denote the cost of multiplying univariate polynomials of degree at most $d$ over the base ring $\K$. Using the algorithms of Schönhage and Strassen~\cite{schonhage1971schnelle}, Schönhage~\cite{schonhage1977schnelle}, and Cantor and Kaltofen~\cite{cantor1991fast}, this can be done in $\softO(d) = O(d \log d \log \log d)$ operations. Over a finite field with $q$ elements, assuming the existence of a Linnik constant, one can take ${\sf M}(d) = O(d \log q \log (d \log q))$ using the result in~\cite{harvey2022polynomial}.


\subsection{Data Representations}
Let $h \in \K[x_1, \dots, x_n]$ be a polynomial of degree $d$. The usual dense representation encodes all $\binom{n+d}{n}$ coefficients, while a sparse representation stores only the nonzero terms. In this work, we use a {\em straight-line program (SLP)}, also known as an {\em algebraic circuit}, as an alternative representation.

An SLP computing polynomials $h_1, \dots, h_m \in \K[x_1, \dots, x_n]$ is a sequence
\[
\gamma = (\gamma_{-n+1}, \dots, \gamma_0, \gamma_1, \dots, \gamma_L),
\]
where $\gamma_{-n+1} := x_1, \dots, \gamma_0 := x_n$, all $h_i$ are among the $\gamma_k$, and for $k > 0$, each $\gamma_k$ is of the form
\[
\gamma_k = a * \gamma_i \quad \text{or} \quad \gamma_k = \gamma_i * \gamma_j,
\]
with $i, j < k$, $a \in \K$, and $* \in \{+, -, \times\}$. The integer $L$ is the SLP’s length. Notably, polynomials with many monomials can admit very short SLPs; for example, $(x+1)^k$ has $k+1$ terms but can be computed with a program of length $O(\log k)$.

SLPs were first studied in probabilistic polynomial identity testing and later used in computer algebra for univariate elimination and factorization problems~\cite{heintz1981absolute, kaltofen1988greatest, kaltofen1989factorization}, and polynomial system-solving algorithms based on Newton-Hensel lifting~ (see e.g.,  \cite{giusti1997lower, giusti1998straight, giusti1995polynomial, GiustiLecerfSalvy2001, Schost03} and references therein).

This representation is general: any polynomial of degree $d$ in $n$ variables can be computed by an SLP of length at most $3 \binom{n+d}{n}$, and a sparse polynomial with $N$ nonzero terms can be computed with $L = O(Nd)$. Moreover, if two $n$-variate power series are given at precision $d$ by SLPs of length $L$, their product truncated at degree $d$ can be computed by an SLP of length $O(d^2 L)$. In fact, all homogeneous components up to degree $d$ can be computed using an SLP of length $(d+1)^2 (L+1)$~\cite{krick1996computational}.

Let $L_\h$ and $L_\g$ denote the lengths of SLPs for polynomials $\h = (h_1, \dots, h_m) $ in  $ k[Y_1, \dots, Y_n]$ and for polynomials $\g = (g_1, \dots, g_n)$ in  $k[X_1, \dots, X_n]$, respectively.  
By appending the SLP of $\h$ after computing $\g$ and substituting $Y_i \mapsto g_i(X_1, \dots, X_n)$, we obtain a SLP for $\f = \h(\g)$.  
In particular, the length of a SLP for $\f = \h(\g)$ is at most $L_\h + L_\g$.


\subsection{Zero-dimensional Parametrizations  }

Let $V \subset \kbar{}^n$ be a finite set defined by polynomials in
$k[X_1,\dots,X_n]$. A \emph{rational univariate representation} (RUR)
$\mathscr{R}=((q,v_1,\dots,v_n),\lambda)$ of $V$ consists of:
\begin{itemize}
\item a square-free polynomial $q \in k[T]$, where $T$ is a new indeterminate
      and $\deg(q)=|V|$;

\item \(n\) univariate polynomials $v_1,\dots,v_n \in k[T]$, each of degree at most $\deg(q)$,
      such that
      \[
      q(T)=0, \
      X_i=\frac{v_i(T)}{q'(T)} \ (i=1,\dots,n), \text{\ where \ } q'=\frac{\partial q}{\partial T};
      \]
\item a linear form $\lambda=\lambda_1X_1+\cdots+\lambda_nX_n$ with coefficients in $k$ such that
      \[
      \lambda(v_1,\dots,v_n) \equiv Tq' \pmod{q},
      \]
      i.e., the roots of $q$ are precisely the values taken by $\lambda$ on $V$.
\end{itemize}
Then we write \(V= Z(\scrR)\). This definition implies that $\lambda$ takes pairwise distinct values on $V$; it is
therefore called a \emph{separating linear form}. One advantage of RUR is that the rational parametrization, with $q'$ in the
denominator, provides effective control of coefficient growth when $k=\mathbb{Q}$,
or of the degree in a parameter when working over a rational function field
$k(y)$ (cf.\ \cite{alonso1996zeros, GiustiLecerfSalvy2001, rouillier1999solving}).

An alternative representation is the \emph{geometric resolution}. A geometric
resolution $\scrS = ((q,w_1,\dots,w_n),\lambda)$ of $V$ consists of a separating linear form
$\lambda$ and polynomials $q,w_1,\dots,w_n$ in $k[T]$ such that $q$ is monic and
square-free, $\deg(w_i)<\deg(q)$ for all $i$, and
\[
q(T)=0, \ 
X_i = w_i(T) \ (i=1,\dots,n), \
\lambda(w_1,\dots,w_n)=T.
\]

If $\mathscr{R}=((q,v_1,\dots,v_n), \lambda)$ is a rational univariate representation
of $V$, then, since $q$ is square-free, $q'$ is invertible modulo $q$.
Computing the inverse of $q'$ modulo $q$ yields a geometric resolution
$((q,w_1,\dots,w_n),u)$ with
\[
w_i \equiv (q')^{-1} v_i \pmod{q}.
\]
We denote by {\sf RUR\_to\_GR} the procedure that performs this conversion. Let \(d = \deg(q) = \deg(V)\). 
Using fast polynomial multiplication, the inverse can be computed in
$O({\sf M}(d)\log d)$ operations in $k$, and the $n$ modular multiplications cost
$O(n{\sf M}(d))$. Hence, the total cost of the conversion is
$\softO(n d)$ arithmetic operations in $k$. In particular, this step is
quasi-linear in the output size and negligible compared to the cost of
computing the parametrization itself. 
In what follows, we freely switch between these two equivalent representations depending on which form is more convenient for complexity analysis.


\section{Global Newton-Hensel Lifting}
\label{sec_glob}
Let $\f = (f_1(\X, T), \dots, f_n(\X, T))$, with $\X = (X_1, \dots, X_n)$. 
Starting from a parametrization $X_i = v_i(U)$ and a minimal polynomial $q(U)$
that describe the solutions of $\f(\X, T) = 0$ modulo $T^m$, the goal is to compute refined polynomials
$V_i(U, T)$ and $Q(U, T)$ such that
\[
\f(\V) \equiv 0 \pmod{\langle T^\delta, Q \rangle}, \qquad \deg_T(V_i) < \deg_T(Q) \le \delta.
\]
This procedure is called the {\sf GLS\_Lifting} algorithm and corresponds to a particular case of \cite[Algorithm 1]{GiustiLecerfSalvy2001}.

Under the assumptions that $\f(\X, T) = 0$ modulo $T^m$ and that the Jacobian $J_{\f, \X}$ 
 at \((v_1(U), \dots, v_n(U))\) is invertible, the procedure {\sf GLS\_Lifting} in Algorithm \ref{alg:GLS} is a Newton-Hensel lifting scheme. Each iteration doubles
the $T$-adic precision while preserving the primitive element relation \(\lambda(v_1(U), \dots, v_n(U)) = U\).
Consequently, after $O(\log(\delta))$ iterations the algorithm returns a parametrization
$(Q(T, U), \V(T, U))$ satisfying $\f(\V) \equiv 0 \pmod{\langle T^\delta, Q \rangle}$. 

\begin{theorem}[\cite{GiustiLecerfSalvy2001}]
Algorithm \ref{alg:GLS} is correct and its complexity is
\[
O((nL + n^\omega) \, {\sf M}(\deg(q)) \, {\sf M}(\delta)) \subset \softO((nL + n^\omega) \, \deg(q) \, \delta)
\]
arithmetic operations in $k$, where $L$ is the length of a straight-line program for $\f$. 
\end{theorem}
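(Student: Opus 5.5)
The plan is to prove correctness of {\sf GLS\_Lifting} by induction on the number of Newton iterations, and then to bound the cost of one iteration.

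\textbf{Set-up.} Work in the ring $A_m = k[T]/(T^m)$. Because $q$ is square-free, the quotient $k[T,U]/\langle T^m, Q\rangle$ is a free $A_m$-module of rank $\deg(q)$. Each iteration will compute a Newton step in this module, with $\lambda$ playing the role of the primitive element.

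\textbf{Correctness.} We argue by induction on the iteration index. Assume we hold $(Q_m, \V_m)$ with the following properties:
\begin{itemize}
\item $Q_m$ is monic in $U$ of degree $\deg(q)$, and $Q_m \equiv q \pmod T$;
\item $\lambda(\V_m) \equiv U$;
\item $\f(\V_m) \equiv 0 \pmod{\langle T^m, Q_m\rangle}$.
\end{itemize}
One iteration proceeds in three steps.
\begin{enumerate}
\item Compute the Newton correction $\bm{\Delta} = J_{\f,\X}(\V_m)^{-1}\f(\V_m)$ modulo $\langle T^{2m}, Q_m\rangle$. The Jacobian is invertible modulo $T$ by hypothesis, hence invertible at every $T$-adic precision. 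Set $\tilde{\V} = \V_m - \bm{\Delta}$. The standard quadratic convergence argument (Taylor expansion, with $\bm{\Delta}\equiv 0 \bmod T^m$) gives $\f(\tilde{\V}) \equiv 0 \pmod{\langle T^{2m}, Q_m\rangle}$.
\item The relation $\lambda(\tilde{\V}) = U$ is now broken. Write $\Delta = \lambda(\tilde{\V}) - U$, which lies in $T^m$. Restore the relation with the first-order change of variable $U \mapsto U - \Delta$:
\[
Q_{2m} = Q_m - \left(\frac{\partial Q_m}{\partial U}\,\Delta \bmod Q_m\right),
\qquad
\V_{2m} = \tilde{\V} - \left(\frac{\partial \tilde{\V}}{\partial U}\,\Delta \bmod Q_m\right).
\]
Since $\Delta^2 \equiv 0 \pmod{T^{2m}}$, this update is exact at precision $2m$.
\item Verify that the new triple satisfies the three inductive properties modulo $\langle T^{2m}, Q_{2m}\rangle$.
\end{enumerate}
Step~3 is the main obstacle. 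The cleanest route is to view the parametrization as an isomorphism
\[
k[T,U]/\langle T^{2m}, Q_{2m}\rangle \;\cong\; k[T,\X]/\bigl(\text{ideal of the lifted points}\bigr)
\]
induced by $U \mapsto \lambda$. This is the uniqueness of Hensel lifts for a separable algebra, and it is exactly the content of \cite[Algorithm~1]{GiustiLecerfSalvy2001}, which we can cite for this point. After $\lceil \log_2 \delta\rceil$ doublings, we truncate at $T^\delta$, which gives $\deg_T(V_i) < \delta$.

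\textbf{Complexity.} Elements of $k[T,U]/\langle T^{2m}, Q\rangle$ are handled by Kronecker-style bivariate arithmetic, so one ring operation costs $O({\sf M}(\deg q)\,{\sf M}(m))$. The cost of one iteration splits as follows.
\begin{itemize}
\item Evaluating $\f$ and its Jacobian takes $O(nL)$ ring operations, using Baur--Strassen to obtain the Jacobian.
\item Solving the linear system takes $O(n^\omega)$ ring operations. Divisions are avoided by updating the inverse Jacobian by Newton iteration along with the solution, rather than recomputing it.
\item The change of variable in Step~2 takes $O(n)$ ring operations.
\end{itemize}
Summing over precisions $m = 1, 2, 4, \dots, \delta$, the geometric series is dominated by its last term. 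This gives the total
\[
O\bigl((nL + n^\omega)\,{\sf M}(\deg q)\,{\sf M}(\delta)\bigr),
\]
and the soft-$O$ form then follows from ${\sf M}(d) \in \softO(d)$.
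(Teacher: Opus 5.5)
Your proposal is correct and follows essentially the same route as the paper, which simply invokes \cite{GiustiLecerfSalvy2001} after describing the same mechanism: Newton steps that double the $T$-adic precision, a first-order change of variable $U \mapsto U-\Delta$ that restores $\lambda(\V)\equiv U$, and a geometric sum of per-iteration costs $O\bigl((nL+n^\omega)\,{\sf M}(\deg q)\,{\sf M}(m)\bigr)$. Two small points: the freeness of $k[T,U]/\langle T^m,Q\rangle$ over $k[T]/(T^m)$ comes from $Q$ being monic in $U$, not from square-freeness, and updating $J^{-1}$ by Newton iteration presupposes an initial inverse of the Jacobian modulo $\langle T,q\rangle$ over the non-field $k[U]/(q)$, a one-time cost you should account for explicitly.
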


\begin{algorithm}[h] 	 
\caption{GLS\_Lifting($\f, q, \v, \lambda, \delta$)}
\label{alg:GLS}

\begin{flushleft}
{\bf \texttt{Input:}}
\begin{itemize}
  \item polynomials $\f = (f_1, \dots, f_n)$ in $k[X_1, \dots, X_n, T]$,
  \item a monic polynomial $q(U) \in k[U]$,
  \item polynomials $\v = (v_1(U), \dots, v_n(U))$ with $\deg(v_i) < \deg(q)$, 
  \item a linear form $\lambda = \lambda_1 X_1 + \dots + \lambda_n X_n$.
\end{itemize}

{\bf \texttt{Assumptions:}}
\begin{itemize}
  \item $\f(\v) \equiv 0 \pmod{\langle T^m, q(U) \rangle}$,
  \item the Jacobian of \(\f\) with respect to \(\X\) is invertible at \(\v\),
  \item $\lambda(\v) \equiv U \pmod{q(U)}$.
\end{itemize}

{\bf \texttt{Output:}}  $(Q, \V) \in k[T, U]^{n+1}$ such that 
$\f(\V) \equiv 0 \pmod{\langle T^\delta, Q \rangle}$
\end{flushleft}

\begin{enumerate}
\item[{\small 1:}] $\V \gets \v$, $Q \gets q$, $k \gets m$

\item[{\small 2:}] $J \gets \frac{\partial \f}{\partial \X}$

\item[{\small 3:}] \textbf{while} \(k < \delta\) \textbf{do}:
    \item[{\small 4:}] \quad $\V \gets \V - J(\V)^{-1} \f(\V) \mod \langle T^k, Q \rangle$
    \item[{\small 5:}] \quad $\Delta \gets \lambda(\V) - U$
    \item[{\small 6:}] \quad $\V \gets \V - (\frac{\partial \V}{\partial U} \cdot \Delta \mod \langle T^k, Q \rangle)$
    \item[{\small 7:}] \quad $Q \gets Q - (\frac{\partial Q}{\partial U} \cdot \Delta \mod \langle T^k, Q \rangle)$
    \item[{\small 8:}] \quad $k \gets 2k$ 
\item[{\small 9:}] \textbf{end while}
\item[{\small 10:}] \textbf{return} $(Q, \V)$
\end{enumerate}
\end{algorithm}

\begin{example}\label{ex:1}
Consider polynomials $f_1(X_1,X_2,T) = X_1 + X_2 - T -1$, $f_2(X_1,X_2,T) = X_1 X_2 - T$, and 
\[
q(U) = U^2 - 4U + 3, \quad 
v_1 = \frac{3}{2} - \frac{U}{2}, \quad 
v_2 = -\frac{1}{2} + \frac{U}{2}.
\]
Then we have $f_1(v_1,v_2) \equiv -T \pmod{\langle q(U) \rangle}$ and
$f_2(v_1,v_2) \equiv -T \pmod{\langle q(U) \rangle}$.
With $\delta = 2$ and $\lambda = X_1 + 3 X_2$, Algorithm \ref{alg:GLS} returns
\[
Q(T, U) = U^2 -4U + 3 + 2T(5-2U),
\] and \[
V_1 = \frac{3}{2} - \frac{U}{2} + \frac{3T}{2}, \quad
V_2 = -\frac{1}{2} + \frac{U}{2} - \frac{T}{2}.
\]
\end{example}



\section{Solving Systems of Composable Polynomials}
\label{sec:core}
Let $\g = (g_1,\dots,g_n) : k^n \to k^n$ and 
$\h = (h_1,\dots,h_n) : k^n \to k^n$ be polynomial maps.
We consider systems of the form
\[
\f(\X) = \h(\g(\X)) = 0,
\]
where $\X = (X_1, \dots, X_n)$. Then solving $\f(\X) = 0$ is equivalent to solving the coupled system
\begin{equation} \label{eq:decomp}
\begin{cases}
    h_1(Y_1, \dots, Y_n) &= 0, \\
    \hspace{0.5cm}  \vdots & \\
    h_n(Y_1, \dots, Y_n) &= 0,
\end{cases}
\ \text{and} \
\begin{cases}
g_1(X_1, \dots, X_n) - Y_1 &= 0, \\
\hspace{0.5cm}  \vdots & \\
g_n(X_1, \dots, X_n) - Y_n &= 0.
\end{cases}    
\end{equation}

In other words, we first solve the outer system $\h(\Y) = 0$ for $\Y = (y_1,\dots,y_n)$ and then lift each solution through the inner map $\g$ to obtain solutions of $\f$. This compositional structure will allow us to reduce the computational cost compared to solving $\f$ directly, as the complexity depends on the degrees of $\h$ and $\g$ individually rather than the (potentially much larger) degree of $\f$.

\subsection{Solving square polynomial systems}
\label{subsec_square1}

We first consider the problem of computing the regular points of a square system
\[
h_1(\Y) = \cdots = h_n(\Y) = 0, \quad \text{where } \Y = (Y_1, \dots, Y_n),
\]
consisting of $n$ equations in $n$ variables, using symbolic homotopy techniques.  
Solving such square systems is a classical problem in symbolic computation, and efficient algorithms are known.  
However, to our knowledge, there is no single reference that provides a complete, detailed algorithm. For this reason, this subsection is devoted to presenting a full procedure.

We also refer the reader to \cite{el2018bit, vu2020homotopy, hauenstein2021solving}, and references therein, for related discussions and problems.  
Algorithm \ref{al:homo_solve} below outputs the \emph{regular points} of $\h = 0$, i.e., the points where the Jacobian of $\h$ has full rank.  
Note that isolated points can also be obtained by a minor modification of the last step of this algorithm by using \cite[Proposition 12]{hauenstein2021solving}. 

The {\sf Homotopy\_Nonsingular} algorithm solves a square polynomial system by first constructing a simple ``start system'' whose solutions are easily computed as intersections of random linear forms. It then continuously deforms this system into the target system using a homotopy and lifts the known solutions along this deformation using a Newton–Hensel style lifting procedure ({\sf GLS\_Lifting}). Rational function reconstruction finally recovers exact solutions over a rational function field, from which we recover a rational univariate representation over the base field. At the final step, we remove the singular points.

\begin{lemma} \label{lemma:remove}
Let \(\h = (h_1, \dots, h_n)\) be a polynomial system in \(k[\Y]\) with \(\Y = (Y_1, \dots, Y_n)\). 
Let \((q, (v_1, \dots, v_n), \lambda)\) be a RUR of a finite set \(W \subset V(\h) \cap \kbar{}^n\). 

Then there exists a procedure {\sf Remove} that removes all the singular points from \(W\) using 
\(\softO(C^2(n^3 + n \, L_\h))\) operations in \(k\), where \(L_\h\) is the length of a straight-line program computing \(\h\) and \(C = \deg(q)\).
\end{lemma}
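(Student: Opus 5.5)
The procedure {\sf Remove} will work in the quotient algebra $A = k[T]/\langle q(T)\rangle$, which has dimension $C=\deg(q)$ over $k$. Points of $W$ correspond to roots of $q$, and a point is singular exactly when $\det J_\h$ vanishes there. So the task is to factor $q = q_{\rm reg}\, q_{\rm sing}$, where $q_{\rm sing}$ collects the roots at which the Jacobian determinant vanishes, and then restrict the parametrization to $q_{\rm reg}$.

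First, I would pass from the RUR to a geometric resolution with {\sf RUR\_to\_GR}, at cost $\softO(nC)$. This gives $w_1,\dots,w_n \in A$ with $Y_i = w_i(T)$ on $W$.

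Second, I would evaluate the Jacobian matrix $J_\h(w_1,\dots,w_n)$ over $A$. By the Baur--Strassen theorem, all partial derivatives of $\h$ are computed by a straight-line program of length $O(nL_\h)$. Running this program over $A$ costs $O(nL_\h)$ operations in $A$, each costing $\softO(C)$ operations in $k$. This step is therefore within $\softO(C\, nL_\h)$, well inside the claimed bound.

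Third, I would compute $\Delta(T) = \det J_\h(w)$ modulo $q$. Since $A$ is not a field, Gaussian elimination with pivoting is unsafe. I would instead use a division-free determinant algorithm, such as Berkowitz with $O(n^4)$ ring operations, or a Kaltofen-style method with roughly $O(n^{3.5})$. A cleaner option, which fits the stated $n^3$ term, is to lift back to polynomials in $T$: treat the entries as polynomials of degree $<C$ and compute the determinant in $k[T]$ by evaluation and interpolation at $O(nC)$ points, each an $n\times n$ determinant over $k$ costing $O(n^3)$. This costs $O(n^4C)$, which is too much. The variant I would actually use evaluates the straight-line program for $J_\h$ directly at points: pick $O(nC\sigma')$ points, where the degree of $\Delta$ as a polynomial in $T$ before reduction is at most $nC$ once the $w_i$ are fixed of degree $<C$. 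Evaluating $J_\h$ at $w(t)$ costs $O(nC + nL_\h)$ per point, and the determinant costs $O(n^3)$ per point. Over $O(nC)$ points this gives roughly $O(nC(n^3+nL_\h))$. That is again larger than needed.

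To reach $C^2$ cleanly, I would use the coarser evaluation scheme. Evaluate at $O(C^2)$ points, which suffices if the degree in $T$ is $O(C^2)$, for instance when $n \le C$; otherwise reduce modulo $q$ earlier. Each point costs $O(n^3 + nL_\h)$, for a total of $\softO(C^2(n^3+nL_\h))$, after which I reduce $\Delta$ modulo $q$.

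Fourth, compute $q_{\rm sing} = \gcd(q,\Delta)$ and $q_{\rm reg} = q/q_{\rm sing}$ in $\softO(C)$. Then reduce each $w_i$ modulo $q_{\rm reg}$ and convert back to an RUR by multiplying by $q_{\rm reg}'$ modulo $q_{\rm reg}$, at cost $\softO(nC)$. Because $q$ is squarefree and $\lambda$ separates the points of $W$, the roots of $q_{\rm reg}$ are exactly the values of $\lambda$ at the regular points of $W$, and $\lambda$ still separates them. This establishes correctness.

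The main obstacle is the determinant step. Everything else is quasi-linear in $C$. The delicate part is bounding the degree in $T$ of the unreduced determinant so that evaluation and interpolation stay within $C^2$ points, or alternatively justifying a division-free determinant over $A$ whose cost stays within $n^3$ ring operations. If the evaluation argument fails, I would fall back on the division-free route over $A$: each ring operation costs $\softO(C)\subset \softO(C^2)$, which absorbs extra polynomial factors in $n$ into the $C^2$ slack when $C \ge n$.
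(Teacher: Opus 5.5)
Your argument breaks at the determinant step. The scheme you finally adopt evaluates the straight-line program for $J_\h$ at $w(t)$ and interpolates the \emph{unreduced} polynomial $\Delta(T)=\det J_\h(w_1(T),\dots,w_n(T))$ from $O(C^2)$ values. It relies on $\deg\Delta\le nC$, and that bound is false. The entries of $J_\h$ have degree up to $\deg(h_i)-1$, so $\deg\Delta$ can reach $(C-1)\sum_i(\deg(h_i)-1)$, and $\deg(h_i)$ can be as large as $2^{L_\h}$. For instance, take $n=1$, $h_1=Y_1^{d}-Y_1$ with $d=2^m$ (an SLP of length $O(m)$), $W=\{0,1\}$, $q=T^2-T$ and $w_1=T$. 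Then $\Delta=dT^{d-1}-1$, which cannot be recovered from $O(C^2)=O(1)$ samples, even though $n\le C$. ``Reducing modulo $q$ earlier'' does not help, because reduction modulo $q$ is not compatible with evaluating at arbitrary points $t$. The only way to make it compatible is to lift the reduced entries back to representatives of degree $<C$, which is exactly the option you discarded. That lifted option is actually correct: its determinant has degree $<nC$ and is congruent to $\Delta$ modulo $q$. Its cost, $\softO(nCL_\h+n^4C)$, fits the target whenever $n\le C$, so you rejected it too quickly. What your proposal leaves unproved is the regime $n>C$. There $n^4C$ exceeds $n^3C^2$, Berkowitz over $A$ also costs $\softO(n^4C)$, and your final fallback explicitly requires $C\ge n$.

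The missing idea is the one the paper uses (via \cite[Section 5]{hauenstein2021solving}): Gaussian elimination over $A=k[T]/\langle q\rangle$ \emph{with splitting}. Since $q$ is squarefree, $A$ is a product of fields, and your objection that elimination is ``unsafe'' disappears once zero divisors are handled. If a candidate pivot $a$ is not invertible, then $g=\gcd(q,a)$ factors $q=g\cdot(q/g)$ with $a\equiv 0 \bmod g$ and $a$ invertible modulo $q/g$, and you continue separately on each factor. There are at most $C$ branches, each an $O(n^3)$ elimination modulo a polynomial of degree $\le C$. So after evaluating $J_\h$ over $A$ in $\softO(CnL_\h)$, the whole procedure costs $\softO(C^2(n^3+nL_\h))$ for all $n$ and $C$. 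It directly outputs the factor of $q$ on which $J_\h$ is singular, and a {\sf Clean}-type step then restricts the parametrization, as in your fourth step. A fast determinant algorithm for polynomial matrices, costing $\softO(n^\omega C)$ on the lifted matrix, would also close the gap. The rest of your proposal is fine: Baur--Strassen for the Jacobian, the gcd with $q$, and passing to $q_{\rm reg}$.
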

\begin{proof}
We remove all the points at which the Jacobian matrix \(J_\h\) of \(\h\) vanishes.  
To do this, we first construct a straight-line program of length \(O(nL_\h)\) computing \(J_\h\).  
Next, we evaluate this matrix modulo \(q\) and use Gaussian elimination modulo \(q\) to identify the divisors of \(q\) that need to be removed.  
As explained in \cite[Section 5]{hauenstein2021solving}, this can be done using 
\(\softO(C^2(n^3 + nL_\h))\) operations in \(k\).  
The final cleaning step is performed using the {\sf Clean} algorithm from \cite[Algorithm 10]{GiustiLecerfSalvy2001}, whose cost is dominated by the previous bound.
\end{proof}

\begin{algorithm}[h] 	 
\caption{\sf Homotopy\_Nonsingular$(\h)$}
\label{al:homo_solve}

\begin{flushleft}
{\bf \texttt{Input:}} $\h = (h_1(\Y), \dots, h_n(\Y))$ with $\Y = (Y_1, \dots, $ $Y_n)$

{\bf \texttt{Output:}}  a RUR of regular points of $\h=0$
$\f(\V) \equiv 0 \pmod{\langle T^\delta, Q \rangle}$
\end{flushleft}

\begin{enumerate}
\item[{\small 1:}] \textbf{for} {$i=1, \dots, n$} \textbf{do}: \label{step:1}

\item[{\small 2:}] \quad construct a ``start system'' $p_i$ as a product of $\deg(h_i)$
          random linear forms 
          \[p_i(\Y) = \prod_{j=1}^{\deg(h_i)} \Big(\lambda_{i,j,0} + \lambda_{i,j,1} Y_1 + \cdots + \lambda_{i,j,n} Y_n \Big)\]

\item[{\small 3:}]  \textbf{end do}:
\item[{\small 4:}] compute a RUR $\scrR_0 = (q_0, (v_{0,1}, \dots, v_{0,n}), u)$ of the start system $\p = (p_1, \dots, p_n)$

\hfill{\(\triangleright\) \ {\(q_0, v_{0, i} \in k[U]\)}}

\item[{\small 5:}] define a homotopy $\r(\Y,T) = (1-T)\cdot \p(\Y) + T\cdot \h(\Y)$ that  deforms $\p$ into $\h$

\item[{\small 6:}] $\scrR \gets$ {\sf GLS\_Lifting}$(\r, q_0, \v_0, u, E)$ \label{step:scrE}

\hfill{\(\triangleright\) {lift $\scrR_0$  along the homotopy  to a RUR $\scrR$ with coefficients in $k[[T]]/\langle T^{2E}\rangle$}}

\hfill{\(\triangleright\) {\(E\): degree of the homotopy curve \(\r\)}}

\item[{\small 7:}] perform rational function reconstruction to obtain a RUR $\scrS$ over $k(T)$

\item[{\small 8:}] $\scrR_1 \gets $ a RUR with coefficients in \(k\) obtained from \(\scrS\)
    \label{step:8}

\item[{\small 9:}] remove from $Z(\scrR_1)$ all points that are singular in $V(\h)$ \label{step:fin}
\end{enumerate}
\end{algorithm}

\begin{theorem}\label{thm:first}
Let $\h=(h_1,\dots,h_n)\in k[\Y]^n$ be a square system and let $L$ be the length of a straight-line program computing $\h$.  

There exists a randomized algorithm  {\sf Homotopy\_Nonsingular} that  computes a rational univariate representation of all regular solutions of $\h=0$ using
\[
\softO\!\big( CEn(L_\h+n^2\gamma) \big) \ \text{operations in } k,
\]
where $C=\deg(h_1)\cdots\deg(h_n)$, \(\gamma = \max_{1\le i \le n}(\deg(h_i))\), and
$E=(\deg(h_1)+1)\cdots(\deg(h_n)+1)$.
\end{theorem}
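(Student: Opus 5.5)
We analyze Algorithm \ref{al:homo_solve} step by step, proving correctness via the standard homotopy argument and bounding the cost of each step. The dominant step should be the lifting in Step \ref{step:scrE}.

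\emph{Correctness.} For generic choices of the coefficients $\lambda_{i,j,\ell}$, the start system $\p$ has exactly $C$ solutions, all regular. They are the intersections of one hyperplane from each family $\{p_i=0\}$, and a generic linear form $u$ separates them. The homotopy $\r(\Y,T)$ defines a curve in $\kbar{}^{n}\times\kbar$, and the following standard facts hold for generic start data: the Jacobian $\partial\r/\partial\Y$ is invertible at $T=0$ on every start point, so the hypotheses of Algorithm \ref{alg:GLS} hold with $m=1$; every irreducible component of this curve through the start points projects dominantly onto the $T$-line; and the fibre at a generic $T$ is finite of cardinality $C$. Consequently the Puiseux branches at $T=0$ are in fact power series in $T$, and the RUR of the generic fibre has coefficients in $k(T)$. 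To bound their degrees in $T$, I would use a Bézout-type inequality for the homogenized curve, with $\deg(r_i)\le \deg(h_i)$ in $\Y$ and degree $1$ in $T$. This gives a multihomogeneous bound of $E=\prod_i(\deg(h_i)+1)$ for the degree of the curve, and the RUR coefficients (with $q'$ in the denominator, in the style of \cite{alonso1996zeros,GiustiLecerfSalvy2001}) then have numerators and denominators of $T$-degree at most $E$. Lifting to precision $2E$ therefore suffices for Padé reconstruction in Step 7. Specializing $T=1$ in Step \ref{step:8} yields a description of the points at $T=1$ that are limits of the homotopy paths. By the standard theory (for instance the analysis in \cite{hauenstein2021solving}), every regular solution of $\h=0$ lies on some path. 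This is because at a regular solution the curve is smooth and transverse to $T=1$, and by genericity that point lies on the single component reached by the start points. The set $Z(\scrR_1)$ is thus a finite superset of the regular solutions, and Step \ref{step:fin}, using Lemma \ref{lemma:remove}, leaves exactly the regular ones. One care point: specialization at $T=1$ may make $q$ non-squarefree or $u$ non-separating. I would handle this by a random change of parameter $T\mapsto$ an affine image, or by taking the squarefree part and recomputing with \emph{Clean}.

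\emph{Complexity.}
\begin{itemize}
\item Steps 1--4: the start solutions come from solving $C$ linear $n\times n$ systems, or more efficiently from a structured construction. Either way the cost is within $\softO(C n^{\omega}+ n C\gamma)$, which is negligible.
\item Step 5: an SLP for $\r$ has length $O(L_\h+n^2\gamma)$, since each $p_i$ costs $O(n\deg(h_i))$.
\item Step 6: by the cost theorem of \cite{GiustiLecerfSalvy2001} with $\deg(q)=C$ and $\delta=2E$, the lifting costs $\softO((n(L_\h+n^2\gamma)+n^\omega)CE)$, which is $\softO(CEn(L_\h+n^2\gamma))$.
\item Step 7: Padé reconstruction of the $(n+1)C$ coefficients costs $\softO(nCE)$.
\item Step 8: specialization costs $\softO(nCE)$.
\item Step 9: by Lemma \ref{lemma:remove}, this costs $\softO(C^2(n^3+nL_\h))$. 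Since $C\le E$, this is within the stated bound.
\end{itemize}

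The main obstacle is the rigorous $T$-degree bound $E$ for the parametrized RUR, together with the genericity claims (the lifting hypotheses at $T=0$, and the fact that all regular solutions are reached). I would first try to derive the degree bound from the multihomogeneous Bézout bound on the incidence curve, as in \cite{GiustiLecerfSalvy2001,Schost03}.
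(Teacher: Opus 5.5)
Your proposal is correct and follows the paper's proof essentially step for step: the same start system, an SLP of length $O(L_\h+n^2\gamma)$ for $\r$, the Bézout bound $E\le\prod_i(\deg h_i+1)$ on the degree of the homotopy curve, the lifting cost $\softO((n(L_\h+n^2\gamma)+n^\omega)CE)$, $\softO(nCE)$ for reconstruction and for passing to a RUR over $k$, and the {\sf Remove} cost $\softO(C^2(n^3+nL_\h))$ absorbed into the total. Your correctness discussion is more explicit than the paper's, which defers to \cite{hauenstein2021solving} and \cite[Lemma 4.4]{rouillier2000finding}. One phrase needs tightening: replace ``the single component reached by the start points'' with the counting argument. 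The generic fibre has exactly $C$ points, all lying on the start branches, so the implicit-function branch through a regular solution at $T=1$ must be one of them.
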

\begin{proof}
For $i = 1, \dots, n$, the polynomial $p_i$ is a product of $\deg(h_i)$ linear forms $\ell_{i,j}$.  
Each linear form can be computed in $O(n)$ operations in $k$, which implies that 
$\p = (p_1, \dots, p_n)$ can be computed by a straight-line program of length 
\(
O\big(n (c_1 + \cdots + c_n)\big)
\) 
operations in $k$, where $c_i = \deg(h_i)$.  

Moreover, the solutions of $\p = 0$ are obtained by setting one factor of each $p_i$ to zero.  
Since each $p_i$ is a product of random linear forms and $k$ is a field of characteristic zero, all linear forms 
\(
(\ell_{i,j})_{i=1,\dots,n,\, j=1,\dots,c_i}
\)
are pairwise distinct. Therefore, the system $\p = 0$ has 
\(
C = c_1 \cdots c_n
\) 
solutions. Each corresponding linear system can be solved in $O(n^3)$ operations, e.g., by Gaussian elimination.  
Hence, computing all solutions of $\p = 0$ requires 
\(
O(C n^3)
\) 
operations in $k$.  

Knowing all the points of $\p = 0$, we can construct a rational univariate representation $\scrR_0$ 
such that $Z(\scrR_0) = V(\p)$ in time $\softO(C n)$ using fast interpolation \cite[Chapter 10]{vonzurGathenGerhard2013}.  
Finally, we remark that, locally around any solution of $\p = 0$, the system $\p = 0$ is equivalent to a linear system.  
Consequently, the Jacobian matrix of $\p$ is invertible at all its roots.

The system $\r \in k[\Y, T]$ is defined as 
\(
\r(\Y, T) = (1-T) \cdot \p(\Y) + T \cdot \h(\Y),
\) 
so that in particular $\r(\Y, 0) = \p$ and $\r(\Y, 1) = \h$.  
The lengths of the straight-line programs to compute $\h$ and $\p$ are $L$ and 
\(
O\big(n (c_1 + \cdots + c_n)\big),
\) 
respectively (as discussed above). Therefore, the system $\r$ can be computed in 
\(
O(L'), \text{ with } L' = L_\h + n (c_1 + \cdots + c_n) = L_\h + n^2 \gamma,
\) 
operations in $k$, with \(\gamma = \max(c_1, \dots, c_n)\).  

Moreover, from the previous discussion, the Jacobian matrix of $\p = \r(\Y, 0)$ with respect to $\Y$ has full rank at all its solutions.  
Hence, all the conditions required to perform the algorithm {\sf GLS\_Lifting} are satisfied.  

Let us write \(V(\r) = V(J) \cup V' \cup V''\). Here \(V(J)\) is the union of all one-dimensional irreducible components of \(V(\r) \subset \kbar{}^{n+1}\) such that projection onto the \(T\)-axis is dense, \(V'\) is the union of all other components of dimension one of \(V(\r)\), and \(V''\) is the union of the components of higher dimension. The zero-set \(V(J)\) is called the {\em homotopy curve}.  
Let \(\eta = \eta_0 + \eta_1 Y_1 + \cdots + \eta_n Y_n + \eta_{n+1} T\) be a generic hyperplane in \(T, Y_1, \dots, Y_n\).  
Then \(\big(V(J) \cup V'\big) \cap V(\eta)\) is a finite set consisting of \(\deg(V(J)) + \deg(V')\) points, while \(V'' \cap V(\eta)\) consists of components of positive dimension.  
Therefore, we take \(E\) to be the number of isolated points of \(V(\r) \cap V(\eta)\); equivalently, \(E = \deg(V(J))\), the degree of the homotopy curve. 

From \(\eta = 0\), we can rewrite \(T\) as 
\(\bar{\eta} = -(\eta_0 + \eta_1 Y_1 + \cdots + \eta_n Y_n)/\eta_{n+1}\).  
The points in \(V(\r) \cap V(\eta)\) are thus in one-to-one correspondence with the solutions of the system 
\(\bar \r = (1-\bar \eta)\cdot\h + \bar \eta\cdot\p\).  
The degrees of \(\bar \r\) are at most \((\deg(h_1)+1, \dots, \deg(h_n)+1)\).  
Thus, by Bézout's theorem, one can take 
\[
E \le \prod_{i=1}^n (\deg(h_i)+1).
\]

Since all the conditions to perform the algorithm {\sf GLS\_Lifting} are satisfied and the required precision in \(T\) is the degree \(E\) of the homotopy curve, computing the parametrization $\scrR$ at Step~\ref{step:scrE} requires
\(
O\big((n L' + n^\omega) \, C \, E\big)
\) 
operations in $k$.

As explained in \cite[Section 5]{hauenstein2021solving} and \cite[Section 2.2]{el2018bit}, we perform rational function reconstruction on all coefficients of $\scrR$, following the approach of \cite{Schost03}, to obtain a RUR $\scrS$ with coefficients in $k(T)$.  
Then, using \cite[Lemma 4.4]{rouillier2000finding}, we obtain a RUR $\scrR_1 = (q_1, (v_1, \dots, v_n), \lambda)$ with coefficients in $k$.  
Computing $\scrS$ requires $\softO(C E n)$ operations in $k$, and obtaining $\scrR_1$ from $\scrS$ requires the same complexity.

At the final stage, we remove all points that are not regular. 
This can be done using the {\sf Remove} algorithm from Lemma~\ref{lemma:remove}, 
which requires \(\softO(C^2(n^3 + nL))\) operations in \(k\). Thus the total compexity of Algorithm \ref{al:homo_solve} is \(\softO(CEn(L_\h + n^2\gamma)\) operations in \(k\). 
\end{proof}

Note  that when the polynomials $h_i$ have arbitrary supports and generic coefficients,
the system can be solved using the sparse symbolic homotopy algorithm of
\cite[Section~5]{jeronimo2009deformation}.
Moreover, when $h_1,\dots,h_n$ form a reduced regular sequence, one may instead
use the geometric resolution algorithm of Giusti et al.
\cite[Theorem~1]{GiustiLecerfSalvy2001}, which computes a geometric resolution
of $\h=0$ using
\[
\softO\!\big(n(nL_\h+n^{\omega}) d^{2}\delta\big)
\]
operations in $k$, where $\delta$ is the geometric degree of $\h$, bounded by $E$
in Theorem~\ref{thm:first}.

We conclude this subsection with a remark concerning the solution of the
original system $\f=0$. One may ask whether the algorithm {\sf Homotopy\_Nonsingular}
could be applied directly to $\f$. Although this is possible, the resulting
complexity would depend on the degrees $\deg(f_i)$ rather than on the smaller
degrees $\deg(h_i)$. In general, since $f_i \in k[g_1,\ldots,g_n]$, the degrees
$\deg(f_i)$ may grow as compositions of the $g_j$, and are typically much
larger than $\deg(h_i)$. Consequently, solving $\f=0$ directly would lead to a
significantly higher computational cost, which motivates the reduction to the
structured system $\h=0$.


\subsection{Solving the second system}
\label{subsec:2nd}

Let \(\left(q_h, ({\bar v}_1, \dots, {\bar v}_n), \mu\right)\) be a zero-dimensional parametrization of the nonsingular solutions of \(\h(Y_1, \dots, Y_n) = 0\). This can be computed using \({\sf Homotopy\_Nonsingular}(\h)\). Here, \({q_h}\) is a square-free polynomial in \(k[T]\) with \(\deg(q_h) \le C = \deg(h_1) \cdots \deg(h_n)\), the \({\bar v}_i\)'s belong to \(k[T]\) with \(\deg({\bar v}_i) < \deg(q_h)\) for all \(i = 1, \dots, n\), and \(\mu = \mu_1 Y_1 + \cdots + \mu_n Y_n\) is such that
\[
\mu({\bar v}_1, \dots, {\bar v}_n) \equiv T \frac{\partial q_h}{\partial T} \mod {q_h}(T).
\]

Setting 
\[
v_i = \frac{\partial q_h}{\partial T} \, {\bar v}_i(T) \mod {q_h}(T)
\]
gives a geometric resolution \((q_h, (v_1, \dots, v_n), \mu)\) of the isolated solutions of \(\h = 0\) with
\[
\mu(v_1, \dots, v_n) \equiv T \mod q_h(T).
\]
Then solving the system \(\g(\X) = 0\) is equivalent to solving
\[
q_h(T) = 0, \qquad
\begin{cases}
g_1(x_1, \dots, x_n) - v_1(T) = 0,\\
\hspace{0.5cm} \vdots\\
g_n(x_1, \dots, x_n) - v_n(T) = 0,
\end{cases}
\]
where \(q_h\) is square-free with \(\deg(q_h) \le C\), \(\deg(v_i) < \deg(q_h)\) for all \(i = 1, \dots, n\), and \(\mu(v_1, \dots, v_n) \equiv T \mod q(T)\).

To do it, we will study the system without the polynomial \(q_h(T)\), i.e., the system \(\g(\X) -  \v(T)\) of \(n\) polynomials in \(n+1\) variables \(\X, T\). We will find polynomials \(Q(U, T)\) and \(V_1(U, T), \dots, V_n(U,T)\) such that \[
Q(U, T) = 0, \qquad
\begin{cases}
g_1(V_1(U, T), \dots, V_n(U, T)) - v_1(T) = 0,\\
\hspace{0.5cm} \vdots\\
g_n(V_1(U, T), \dots,V_n(U, T)) - v_n(T) = 0. 
\end{cases}
\]
Since \(\deg_T(q_h) \le C \), it is sufficient to perform our computations to compute \(Q(U, T)\) and \(V_i(U, T)\) up to the precision \(T^C\).

\begin{algorithm}[h] 	 
\caption{{\sf Parametric}(\(\g, \v, C\))}
\begin{flushleft}
{\bf \texttt{Input:}} polynomials \(\g = (g_1, \dots, g_n)\) in \(k[X_1, \dots, X_n]\), a vector \(\v = (v_1, \dots, v_n) \in k[T]^n\), and a positive integer \(C\)

{\bf \texttt{Output:}} \(Q(U, T) \in k[U, T]\), \(\V = (V_1,\dots,V_n)\in k[U, T]^n\), and a linear form  \(\lambda = \lambda_1X_1 + \cdots + \lambda_nX_n\) such that\[\g(V_1(U, T), \dots, V_n(U, T)) - \v(T) = 0 \ \mod \ {\langle T^C, Q(U, T) \rangle}\]
\end{flushleft}

\begin{enumerate}
\item[{\small 1:}] \(s_0 \gets \) random point in \(k\)

\item[{\small 2:}] \(\bar{\F}(\X) \gets \g(\X) - \v(s_0)\)

\item[{\small 3:}]  \(\big(\qbar, (\wbar_1, \dots, \wbar_n), \lambda \big) \gets \) a geometric resolution of \(\bar{\F} = 0\) \label{step:shift}

\hfill{\(\triangleright\) {primitive element \(\lambda = \lambda_1X_1 + \cdots + \lambda_nX_n\)}}

\hfill{\(\triangleright\) {minimal polynomial \(\qbar(U) \in k[U]\)}}

\hfill{\(\triangleright\) {\(\bar{w}_i(U) \in k[U]\), \(\deg(\bar{w}_i) < \deg(\qbar)\), with \(X_i = \bar{w}_i(U)\)}}

\item[{\small 4:}]  \(S \gets T - s_0\)
\item[{\small 5:}]  \(\F(\X, S) \gets \g(\X) - \v(S + s_0)\)

\item[{\small 6:}] \((Q_{\text{shift}}, (V_{1,\text{shift}}, \dots, V_{n,\text{shift}})) \gets  \) \(  {\sf GLS\_Lifting}(\F, \qbar,  \bar{\w},  \lambda, 2C)\) \label{step:shift2}

\item[{\small 7:}] \(Q_{\text{poly}}(U, S) \gets \text{RationalReconstruction}(Q_{\text{shift}}, C)\)

\item[{\small 8:}] \textbf{for} \(i=1\) to \(n\) \textbf{do}

\item[{\small 9:}] \quad \(V_{i,\text{poly}}(U, S) \gets \text{RationalReconstruction}(V_{i,\text{shift}}, C)\)
\item[{\small 10:}] \textbf{end for}
\item[{\small 11:}] \(Q(U, T) \gets Q_{\text{poly}}(U, T - s_0)\)
\item[{\small 12:}] \textbf{for} \(i=1\) to \(n\) \textbf{do}
\item[{\small 13:}] \quad \(V_i(U, T) \gets V_{i,\text{poly}}(U, T - s_0)\)
\item[{\small 14:}] \textbf{end for}
\item[{\small 15:}] \textbf{return}  \((Q(U,T), (V_1(U,T),\dots,V_n(U,T)), \lambda)\)
\end{enumerate}
\end{algorithm}

\begin{theorem} \label{thm:DJC}
Let
\(\g = (g_1, \dots, g_n) \) be in \(k[X_1, \dots, X_n]\), 
\(\v = (v_1, \dots, v_n) \in k[T]^n\), and \(C\) be a positive integer. The algorithm 
\({\sf Parametric}(\g, \v, C)\) returns 
polynomials \(Q(T, U) \in k[U, T]\) and 
\(\V = (V_1, \dots, V_n) \in k[U, T]^n\) such that, for \(i = 1, \dots, n\),
\[g_i(V_1(U, T), \dots, V_n(U, T)) - v_i(T) = 0 \ \mod \ {\langle T^C, Q(U, T) \rangle}\]
Moreover, the complexity of this algorithm is 
\[
\softO\!\big( n (L_\g + n^2) \,  D \, (J+C) \big)
\] 
operations in \(k\), where \(D = \deg(g_1) \cdots \deg(g_n)\), 
\(J = (\deg(g_1)+1) \cdots (\deg(g_n)+1)\), and
\(L_\g\) is the length of a straight-line program computing \(\g\). 
\end{theorem}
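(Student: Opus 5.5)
The proof follows the steps of {\sf Parametric}: correctness and cost are checked step by step, with the cost dominated by Step~\ref{step:shift} (solving the specialized system) and Step~\ref{step:shift2} (lifting to precision $2C$).

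\emph{Step~\ref{step:shift}.} For a random $s_0\in k$, the system $\bar{\F}=\g(\X)-\v(s_0)$ is square, and every $\bar F_i$ has degree $\deg(g_i)$. I would compute its geometric resolution with {\sf Homotopy\_Nonsingular} (Theorem~\ref{thm:first}) applied to $\bar\F$, followed by {\sf RUR\_to\_GR}. The number of solutions is at most the Bézout bound $D$, and the homotopy curve has degree at most $J$. An SLP for $\bar\F$ has length $L_\g+O(n)$, since the constants $v_i(s_0)$ are simply subtracted. Theorem~\ref{thm:first} then gives cost $\softO(DJn(L_\g+n^2\sigma))$, which I absorb as $\softO(n(L_\g+n^2)DJ)$ by treating $\sigma$ as a logarithmic or lower-order factor, or by folding it into $L_\g$. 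Two genericity facts are needed here. First, for $s_0$ outside a proper Zariski closed set, the fibre $\g^{-1}(\v(s_0))$ consists only of regular points, so that $\deg(\qbar)\le D$ and the Jacobian $J_{\g}$ is invertible modulo $\qbar$. Second, $\lambda$ is separating for that fibre. Both hold by a Sard-type generic smoothness argument in characteristic zero (or in large characteristic) applied to the map $T\mapsto \v(T)$ composed with the fibres of $\g$, restricted to the components that dominate.

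\emph{Step~\ref{step:shift2}.} After the substitution $S=T-s_0$, the system $\F(\X,S)$ vanishes modulo $\langle S,\qbar\rangle$ at $\bar\w$, because $\bar\F(\bar\w)\equiv0$. The Jacobian of $\F$ with respect to $\X$ is $J_\g$, which does not depend on $S$ and is invertible at $\bar\w$. Finally, $\lambda(\bar\w)\equiv U$ holds by construction. So the hypotheses of Algorithm~\ref{alg:GLS} hold with $m=1$. An SLP for $\F$ needs $\v(S+s_0)$; its Taylor shift costs only $\softO(nC)$ and can be precomputed, so $L=O(L_\g+n)$. The theorem of \cite{GiustiLecerfSalvy2001} then gives cost $\softO((nL_\g+n^\omega)DC)\subset\softO(n(L_\g+n^2)DC)$ at precision $2C$. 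The output satisfies $\F(\V)\equiv0 \bmod \langle S^{2C},Q\rangle$, and since the shift is a ring automorphism this is the same as the stated congruence in $T$ (a fortiori modulo $T^C$ once we shift back). The rational reconstruction steps act on $n+1$ coefficient series of degree below $D$ in $U$, each known to precision $2C$. By \cite{Schost03} they cost $\softO(nDC)$, and the final shift back $S\mapsto T-s_0$ costs the same.

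Adding the two contributions gives $\softO(n(L_\g+n^2)D(J+C))$. The main obstacle I expect is justifying the genericity of $s_0$: we need the fibre over $\v(s_0)$ to be reduced, with exactly the generic number of points, so that lifting along $S$ produces the correct $Q$ and the reconstruction at degree $C$ is consistent. I would first argue that the discriminant locus of $\g$ pulled back along the curve $T\mapsto\v(T)$ is a proper closed subset of the $T$-line unless the whole curve lies in the critical values. In that degenerate case, the corresponding points are exactly the non-regular ones that the later step discards using the Jacobian of $\g$.
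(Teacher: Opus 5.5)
Your outline follows the paper's proof step for step: you solve the fibre over $\v(s_0)$ with {\sf Homotopy\_Nonsingular} and {\sf RUR\_to\_GR}, check the {\sf GLS\_Lifting} hypotheses with $m=1$, lift to precision $2C$, reconstruct, shift back, and add the costs. However, two steps fail, and the paper's proof has the same two holes.

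The first is the shift back. The congruence does not hold ``a fortiori modulo $T^C$'' just because $S\mapsto T-s_0$ is a ring automorphism. The automorphism sends $\langle S^{2C},Q(U,S)\rangle$ to $\langle (T-s_0)^{2C},Q(U,T-s_0)\rangle$, so what you control is the expansion at $T=s_0$, not at $T=0$. You cannot take $s_0=0$, because the fibre over $\v(0)$ may be singular; that is why $s_0$ is random. The only way to move the identity from $s_0$ to $0$ is exactness of the reconstruction. That would mean the reconstructed $Q,V_i$ are the parametric resolution of the curve $\g(\X)=\v(T)$ over $k(T)$, so that the congruence holds modulo $Q$ alone. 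This requires the $T$-degrees of that resolution to be recoverable from $2C$ Taylor coefficients. Those degrees are governed by the degree of the curve, not by $C$. For instance, take $\g=(X_1^d,X_2^d)$, $\v=(T,T+1)$ (so $\deg v_i<C=2$), and a separating $\lambda=\lambda_1X_1+\lambda_2X_2$. The minimal polynomial of $\lambda$ over $k(T)$ has constant coefficient $\bigl((-\lambda_2)^d(T+1)-\lambda_1^dT\bigr)^d$. For generic $\lambda$ this has degree $d$ in $T$, so it cannot be reconstructed from precision $2C=4$ once $d\ge 4$. Moreover, $s_0=0$ is excluded here because $X_1^d=0$ is singular. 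The paper's remark that the branches are ``algebraic of degree at most $C$ in $S$'' is exactly this missing bound, stated without proof. As written, both arguments only give congruences modulo powers of $T-s_0$.

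The second problem is the factor $\sigma$. Theorem~\ref{thm:first} applied to $\g(\X)-\v(s_0)$ gives $\softO(DJn(L_\g+n^2\sigma))$. The term $n^2\sigma$ comes from the length $n\sum_i\deg(g_i)$ of the program for the start system of products of linear forms. It is neither a logarithmic factor nor absorbed by $L_\g$: $g_1=X_1^{\sigma}$ has a program of length $O(\log\sigma)$, while $\sigma$ can be as large as $D$. So $\softO(n(L_\g+n^2)D(J+C))$ does not follow. The paper's proof derives exactly your Step~3 bound, drops $\sigma$ in the total without comment, and its main theorem reinstates $n^2\sigma$ when it invokes this result. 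To genuinely remove $\sigma$ you would need a start system with a logarithmic-length program, such as $X_i^{\deg(g_i)}-a_i$, together with the corresponding version of Theorem~\ref{thm:first}.

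Finally, the obstacle you single out is not the one that matters for this statement. {\sf Homotopy\_Nonsingular} returns only regular points together with a separating form, so the {\sf GLS\_Lifting} hypotheses hold for every $s_0$, possibly with $\deg\qbar<D$. Genericity of $s_0$ is needed only for completeness of the lifted fibres, which the statement does not claim. Your fallback for the degenerate case is also wrong: a critical value of $\g$ can have regular preimages, and those must not be discarded.
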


\begin{proof}
For any \(\tau \in \overline{k}\), by the Bézout bound, the system 
\(\g(\X) - \v(\tau) = 0\) has at most 
\(
D = \deg(g_1) \cdots \deg(g_n)
\) 
isolated solutions.  
Let \(s_0 \in k\) be a point such that the fiber \(\g(\X) = \v(s_0)\) has exactly \(D\) isolated {regular} solutions and the linear form 
\(
\lambda = \lambda_1 X_1 + \cdots + \lambda_n X_n
\)
takes distinct values at distinct solutions of \(\g(\X) - \v(s_0) = 0\).  
Such a point \(s_0\) exists because:  
\begin{itemize}
    \item The set of \(T\) for which \(\v(T)\) is a {regular value} of \(\g\) is Zariski open (by the algebraic version of Sard's lemma; see e.g., \cite[Proposition B.3]{SafeySchost2017}).  
    \item The set of linear forms \(\lambda\) that separate all points in a fiber is Zariski open; hence a generic \(\lambda\) works for all fibers over a dense open subset of \(T\).
\end{itemize}
Let \(S = T - s_0\) and define 
\(
\F(\X, S) = \g(\X) - \v(S + s_0).
\) 
Then \(\F(\X, 0) = \g(\X) - \v(s_0)\).  
By construction, the geometric resolution 
\(
(\bar{q}(U), (\bar{w}_1(U), \dots, \bar{w}_n(U)), \lambda)
\) 
of \(\F(\X, 0) = 0\) satisfies
\[
\F(\bar{w}_1(U), \dots, \bar{w}_n(U), 0) \equiv 0 \mod \bar{q}(U),
\] 
and 
\[
\lambda(\bar{w}_1(U), \dots, \bar{w}_n(U)) \equiv U \mod \bar{q}(U).
\]  
Moreover, since all solutions of \(\F(\X, 0)\) are regular, the Jacobian matrix 
\(
J_{\X,\F}(\bar{\w}(U), 0) = J_{\g}(\bar{\w}(U))
\) 
is invertible modulo \(\bar{q}(U)\).  
Therefore, the input conditions for \({\sf GLS\_Lifting}\) are satisfied.

The \({\sf GLS\_Lifting}\) algorithm takes as input the system \(\F(\X, S)\) with parameter \(S\), 
the initial data 
\((\bar{q}(U), (\bar{w}_1(U), \dots, \bar{w}_n(U)))\) at \(S = 0\) with invertible Jacobian matrix, 
and a degree bound \(\delta\). 
It outputs 
\(
(Q_{\rm shift}, (V_{1,{\rm shift}}, \dots, V_{n,{\rm shift}}))
\) 
such that 
\[
\F(V_{1,{\rm shift}}, \dots, V_{n,{\rm shift}}, S) \equiv 0 \mod \langle S^\delta, Q_{\rm shift} \rangle,
\] 
and 
\[
\lambda \big(\V_{\rm shift}(U, S)\big) \equiv U \mod \langle S^\delta, Q_{\rm shift} \rangle.
\]  
Moreover, as above, we only need to do the computation at the precision satisfies \(\delta \ge 2C\). 
Therefore, it suffices to perform the \({\sf GLS\_Lifting}\) algorithm at precision \(\delta = 2C\).

Since the solution branches are algebraic of degree at most \(C\) in \(S\), 
and the power series converge to the algebraic functions \(\X(S)\) along each branch,  we need to  perform rational reconstruction at degree \(C\) to obtain polynomials 
\(
(Q_{\rm poly}, (V_{1,{\rm poly}}, \dots, V_{n,{\rm poly}}))
\) 
such that
\[
\F(V_{1,{\rm poly}}(U), \dots, V_{n,{\rm poly}}(U), S) \equiv 0 \mod \langle S^C, Q_{\rm poly}(U, S) \rangle,
\] 
and 
\[
\lambda \left(\V_{\rm poly}(U, S) \right) \equiv U \mod \langle S^C, Q_{\rm poly}(U, S)\rangle.
\]

Let \(Q(U, T) = Q_{\rm poly}(U, T- s_0)\) and \(V_i(U, T) = V_{i, \rm poly}(U, T- s_0)\). Since \(\F(\X, S) = \g(\X)- \v(S+ s_0)\) we have, for \(S = T- s_0\), 
\[\g(V_1(U, T), \dots, V_n(U, T)) - \v(T) \equiv 0 \ \mod \ \langle T^C, Q(U, T) \rangle \]
 and \[
 \lambda \left(V_1(U, T), \dots, V_n(U, T) \right)\equiv U \ \mod \ \langle T^C,  Q(U, T) \rangle.
 \]

Finally,  we show that the map
\[
\{(u, t) \in \kbar{}^2 : Q(u, t) = 0 \} \rightarrow Z, 
\quad (u, t) \mapsto (\V(u, t), t),
\]
where \(Z = V(\g(\X) - \v(T))\), is bijective at precision \(C\). 

Let \(({\bm \alpha}, t) \in Z\) and set \(u = \lambda({\bm \alpha})\).
Consider the branch through \(({\bm \alpha}, t)\). 
For generic \(s_0\), there is a unique branch of \(\g(\X) = \v(T)\) passing through \(({\bm \alpha}, t)\). 
This branch is computed by \({\sf GLS\_Lifting}\) starting from the fiber at \(s_0\). 
Since \(s_0\) is generic, the branch is defined at \(S = t - s_0\) and connects to \(S = 0\) by algebraic continuation. 
Hence \((u,t)\) lies on the lifted curve, so \(Q(u, t) = 0\). 
This proves that the above map is surjective.
For injectivity, suppose
\((\V(u, t), t) = (\V(u', t'), t')\).
Then \(t = t'\) and \(\V(u,t) = \V(u', t)\).
Applying the linear form \(\lambda\) yields
\(u = u' \mod Q(U,t)\).
Since \(Q(U,t)\) is square-free and \(\deg_U(Q) = D\), which equals the number of solutions in a generic fiber, we conclude that \(u = u'\).
Thus the map is injective.

\smallskip
We conclude the proof by providing the complexity analysis of the {\sf Parametric} algorithm.  
To compute a geometric resolution of \(\bar{\F}\) at Step~3, 
we first perform {\sf Homotopy\_Nonsingular} on the input system \(\bar{\F}\) to obtain a RUR of \(\bar{\F} = 0\), 
and then convert this RUR into a geometric resolution using {\sf RUR\_to\_GR}.  
By Theorem~\ref{thm:first}, this step requires 
\[
\softO\!\big( D J n (L_\g + n^2\sigma) \big)
\] 
operations in \(k\), where 
\(D = \deg(g_1) \cdots \deg(g_n)\),  
\(J = (\deg(g_1)+1) \cdots (\deg(g_n)+1)\), and \(\sigma = \max_{1 \le i \le n}(\deg(g_i)\).

At Step~6, the {\sf GLS\_Lifting} procedure requires
\[
\softO\big( (nL + n^\omega) \, \deg(\qbar) \, C \big) = \softO\big( n (L + n^2) \, D \, C \big)
\] 
operations in \(k\).  
Finally, performing rational reconstruction on all coefficients of 
\(Q_{\rm poly}(U, S)\) and \(V_{i,{\rm poly}}(U, S)\) requires 
\(\softO(C J n)\) operations in \(k\), 
which is dominated by the previous steps.  

Therefore, the total complexity of the algorithm is 
\[
\softO\!\big( n (L + n^2) \, D \, (J + C) \big)
\] 
operations in \(k\). 
\end{proof}

\begin{example} \label{ex:2}
Consider 
\(
g_1 = X_1 + X_2, \quad g_2 = X_1 X_2, \quad 
v_1(T) = T+1, \quad v_2(T) = T.
\) 
Then the system at \(T = 0\) is
\(
g_1(X_1, X_2) - v_1(0) = X_1 + X_2 - 1, \quad 
g_2(X_1, X_2) - v_2(0) = X_1 X_2.
\)

Take a random point \(s_0 = 0\). Performing 
\({\sf Homotopy\_Nonsingular}\) to obtain a RUR, followed by 
\({\sf RUR\_to\_GR}\), gives a geometric resolution of 
\(
g_1(X_1, X_2) - v_1(0) = g_2(X_1, X_2) - v_2(0) = 0
\) 
as
\[
\bar{q}(U) = U^2 - 4U + 3, \quad
\begin{cases}
\bar{w}_1(U) = \frac{3}{2} - \frac{U}{2},\\
\bar{w}_2(U) = -\frac{1}{2} + \frac{U}{2},
\end{cases}
\quad \text{and} \quad \lambda = X_1 + 3 X_2.
\]
Here, \(D = 2\) and \(C=2\).  The output of the \({\sf Parametric}\) algorithm is
\[
Q(U,T) = U^2 - (4T+4)U + (3T^2 + 10T + 3), \] and\[
w_1(U,T) = \frac{3}{2} - \frac{U}{2} + \frac{3T}{2}, \quad
w_2(U,T) = -\frac{1}{2} + \frac{U}{2} - \frac{T}{2}.
\]
\end{example}


\subsection{Merging two polynomial systems}

Let $(q_h(T), v_1(T), \dots, v_n(T), \mu)$ be a geometric resolution of the
zero set of $\h$. Then
\[
q_h(T)=0 \quad \text{and} \quad \h(v_1(T), \dots, v_n(T))=0,
\]
with $\deg(v_i) < \deg(q_h) \le C = \deg(h_1)\cdots\deg(h_n)$.

Next, we apply the algorithm ${\sf Parametric}(\g,\v,C)$ to obtain
polynomials
\[
(Q(U,T), V_1(U,T), \dots, V_n(U,T))
\]
parametrizing the solutions of $\g(\X)=\v(T)$.
Combining both parametrizations leads to the system
\begin{equation}\label{eq_elimi}
q_h(T)=0, \qquad
Q(U,T)=0, \qquad
X_i = V_i(U,T) \ (i=1,\dots,n).
\end{equation}
Therefore, solving $\f(\X)=\h(\g(\X))=0$ reduces to obtaining from \eqref{eq_elimi} a geometric resolution
\[
P(S)=0, \qquad
X_i = W_i(S) \ (i=1,\dots,n),
\]
which parametrizes the solutions of $\h(\g(\X))=0$.
To do this, we compute a geometric resolution for $(U,T)$ and then substitute the resulting parametrizations into $V_i(U,T)$ to obtain $W_i$.
\begin{theorem}
The Algorithm~\ref{alg:final} is correct, and its complexity is
\[
\softO\!\big( n (L_\h + L_\g + n^2 (\gamma + \sigma)) (C E + D J + D C) \big)
\] 
operations in $k$, where \newline
- $L_\h$ and $L_\g$ are the lengths of the straight-line programs computing $\h$ and $\g$, respectively, \newline
- $C = \deg(h_1) \cdots \deg(h_n)$, $E = (\deg(h_1)+1) \cdots (\deg(h_n)+1)$, \newline 
- $D = \deg(g_1) \cdots \deg(g_n)$, $J = (\deg(g_1)+1) \cdots (\deg(g_n)+1)$, \newline
- $\gamma = \max_{1 \le i \le n} \deg(h_i)$, and $\sigma = \max_{1 \le i \le n} \deg(g_i)$. 
\end{theorem}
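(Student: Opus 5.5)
The proof proceeds in three stages, matching the structure of Algorithm~\ref{alg:final}: solve the outer system, lift through the inner map, then merge the two parametrizations and discard singular points.

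\emph{Stage 1: outer system.} Apply Theorem~\ref{thm:first} to $\h$. This yields a RUR of the regular solutions of $\h=0$ at cost $\softO(CEn(L_\h+n^2\gamma))$. Convert it with {\sf RUR\_to\_GR} into a geometric resolution $(q_h,\v,\mu)$ with $\deg q_h\le C$, at negligible cost $\softO(nC)$.

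\emph{Stage 2: inner map.} Call ${\sf Parametric}(\g,\v,C)$. By Theorem~\ref{thm:DJC} this returns $(Q,\V,\lambda)$ with $\g(\V)-\v(T)\equiv 0 \bmod \langle T^C,Q\rangle$, at cost $\softO(n(L_\g+n^2\sigma)D(J+C))$. This accounts for the $DJ+DC$ terms.

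\emph{Stage 3: merging.} The key observation for correctness is that every solution $\bm\alpha$ of $\f=0$ with $\g(\bm\alpha)$ a regular zero of $\h$ corresponds to a unique $t$ with $q_h(t)=0$ and $\v(t)=\g(\bm\alpha)$. This uses that $\mu$ separates. By the bijectivity statement at the end of the proof of Theorem~\ref{thm:DJC}, it further corresponds to a unique $u$ with $Q(u,t)=0$ and $\bm\alpha=\V(u,t)$.

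Since $q_h$ has degree $\le C$, the truncation at precision $T^C$ is harmless: we may reduce $Q$ and the $V_i$ modulo $q_h(T)$. The system \eqref{eq_elimi} is then a zero-dimensional system in $(U,T)$, and the ring $k[U,T]/\langle q_h,Q\rangle$ is a free $k[T]/\langle q_h\rangle$-module of rank $\deg_U Q=D$. It therefore has dimension at most $CD$.

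I would pick a random separating form $S=U+aT$ and compute its minimal polynomial $P(S)$ as a resultant $\mathrm{Res}_T(q_h(T),Q(S-aT,T))$. Alternatively, one can compute the characteristic polynomial of multiplication by $S$ in this algebra using fast modular composition or power projection. Either route costs $\softO(CD)$ times small factors, and a squarefree part can be taken afterwards. Parametrizations $T=\tau(S)$ and $U=S-a\tau(S)$ come from the standard derivative trick: perturb the form to $U+(a+\epsilon)T$ and read off the $\epsilon$-coefficient, as in Rouillier's RUR. Substituting into $V_i(U,T)$ modulo $P$ gives $W_i(S)$. Each of the $n$ bivariate substitutions costs $\softO(CD)$ via Kronecker-style modular composition, so this step is $\softO(nCD)$.

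Finally, apply {\sf Remove} (Lemma~\ref{lemma:remove}) with the SLP of $\f=\h\circ\g$ of length $L_\h+L_\g$ to discard points where $J_\f=J_\h(\g)J_\g$ is singular. Points with $J_\h$ singular were already removed in Stage~1, so in effect this removes the points where $J_\g$ is singular. The result is exactly the regular solutions of $\f$. Isolated regular points are the same as regular points here.

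\emph{Complexity and main obstacle.} Summing the stage costs gives the announced bound, provided the merging stage and the cleaning stage stay within $\softO(n(L_\h+L_\g+n^2(\gamma+\sigma))(CE+DJ+DC))$. This is the main obstacle. A naive {\sf Remove} on $CD$ points costs $(CD)^2$, which is not dominated.

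To get around this, I would avoid a global cleaning on $P$. Instead, I would test invertibility of $J_\g(\V)$ inside the lifting: {\sf GLS\_Lifting} already maintains $J_\g(\V)^{-1}$ modulo $\langle T^C,Q\rangle$, and its determinant, reduced modulo $q_h$, can be gcd'd with $P$ after substitution. This costs $\softO(n(L_\g+n^\omega)CD)$ and fits in the $DC$ term. Genericity of $s_0$, $\lambda$, $\mu$ and $a$ (Zariski-open conditions, hence Monte Carlo by Schwartz--Zippel) justifies each separation and fiber-count assumption used above.
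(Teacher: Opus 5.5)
Your Stages~1--2 and the correspondence argument (a unique $t$ via $\mu$, then a unique $u$ via the bijection from the proof of Theorem~\ref{thm:DJC}) are the same as the paper's proof. The paper then only says that Step~4 of Algorithm~\ref{alg:final} is negligible ``since only two variables are involved'', and it never charges Step~8 ({\sf Remove}).

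Two things you do are correct and go beyond the paper. You observe that Lemma~\ref{lemma:remove} applied to $P$ costs $\softO((CD)^2(n^3+n(L_\h+L_\g)))$, which is not dominated by the stated bound; this is an omission in the paper's proof. You also explain, via $J_\f=J_\h(\g)\,J_\g$, what that step actually discards.

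Your cleaning mechanism fails as stated. What {\sf GLS\_Lifting} maintains is $J_\g(\V)^{-1}$ as a power series in $T-s_0$, truncated at order $2C$. That series is a unit, since $s_0$ is a regular value, and it carries no information about the values of $\det J_\g$ at the roots of $q_h$. So ``reducing it modulo $q_h$'' has no meaning. The same non sequitur is in your sentence ``since $\deg q_h\le C$, truncation at $T^C$ is harmless'', which you inherit from Section~\ref{subsec:2nd}. A congruence modulo a power of $T-s_0$ implies nothing modulo $q_h$. You would need the exact parametric resolution over $k(T)$, and its $T$-degree is governed by $\deg\g$ and $\deg\v$, not by $\deg q_h$.

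The cleaning itself is repairable once $P$ and the $W_i$ are known. Evaluate the straight-line program of $J_\g$ (length $O(nL_\g)$) at $\W(S)$ modulo $P$. Compute the determinant by elimination over $k[S]/\langle P\rangle$ with splitting, and divide $P$ by its gcd with that determinant. This costs $\softO(n(L_\g+n^2)CD)$ and fits the $DC$ term.

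The real gap is the merge. You claim $\softO(CD)$ for computing $P$ and the $W_i$, but none of the tools you cite gives that in the algebraic model over a general field:
\begin{itemize}
\item the straightforward evaluation--interpolation computation of $\mathrm{Res}_T(q_h,Q(S-aT,T))$ costs $\softO(C^2D)$;
\item power projection, and the substitutions $V_i(\upsilon(S),\tau(S)) \bmod P$ (a bivariate modular composition modulo a polynomial of degree $CD$), cost $O((CD)^{(\omega+1)/2})$ by baby-step/giant-step, and no quasi-linear algorithm is known;
\item Kronecker substitution speeds up products, not compositions.
\end{itemize}
This cost is not absorbed by the bound. Take all $\deg h_i=\deg g_i=d\ge 2$. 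The budget is then polynomial in $n,d,L_\h,L_\g$ times $(d(d+1))^n$, while $CD=d^{2n}$. Any cost $(CD)^{\theta}$ with $\theta>1+\log(1+1/d)/(2\log d)$ therefore dominates for large $n$, for instance when $L_\h,L_\g$ are polynomial in $n$. Both $\theta=3/2$ and $\theta=(\omega+1)/2$ exceed this threshold. The paper's ``two variables, hence negligible'' has exactly the same hole. Proving the stated bound needs an essentially optimal primitive-element computation for the triangular set $(q_h,Q)$, and neither you nor the paper supplies one.

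One more point: $S=U+aT=\lambda(\X)+a\,\mu(\g(\X))$ is not linear in $\X$. Your output is therefore not a geometric resolution in the paper's sense without another change of primitive element. The paper's choice $U=\lambda(\X)$ avoids this. That choice is legitimate because a random $\lambda$ separates the finite set of points involved, which is a better justification than the square-freeness of each $Q(U,t)$ that the paper invokes.
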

\begin{proof}
Since $Q(U, T)$ is square-free in $U$ for each root $T$ of $q_h$, any $(\X, T)$ satisfying $\h(\g) = 0$ corresponds to a unique $(U, T)$ with 
\[
Q(U, T) = 0, \quad q_h(T) = 0, \quad \text{and} \quad \X = \V(U, T).
\]

Since $Q(U, T)$ is square-free in $U$ for each root $T$ of $q_h$, the linear form $\mu = U$ separates the solutions of the $(U,T)$ system.  Computing a geometric resolution of $Q = q_h = 0$ yields $(P(S), (\upsilon(S), \tau(S)), U)$. 

Moreover, the linear form $\lambda$ obtained in the {\sf Parametric} procedure satisfies 
\[
\lambda(V_1, \dots, V_n) \equiv U \pmod{Q(U, T)}.
\] 
Therefore, after substitution, we have 
\[
\lambda(W_1(S), \dots, W_n(S)) \equiv S \pmod{P(S)}.
\]
Hence, $\big(P(S), W_1(S), \dots, W_n(S)), \lambda\big)$ forms a geometric resolution of $\h(\g) = 0$. 

We now give the complexity of the algorithm. Computing a geometric resolution of $\h = 0$ requires 
\[
\softO\!\big( C E n (L_\h + n^2\gamma) \big)
\] 
operations in $k$ by Theorem~\ref{thm:first}, while the {\sf Parametric} procedure needs 
\[
\softO\!\big( n (L_\g + n^2\sigma) \, D \, (J + C) \big)
\] 
operations in $k$ by Theorem~\ref{thm:DJC}. At Step~\ref{Step:scrRR}, the system involves only $2$ variables, with 
\[
\deg_T(Q(U,T)) \le C,  \deg_U(Q(U,T)) \le D, \ \text{and} \ \deg(q_h(T)) \le C,
\]
so the cost of this step is negligible compared to the previous computations. Finally, the evaluations only require linear algebra operations. 

Thus, the total cost of the algorithm is
\begin{multline*}
  \softO\!\big( C E n (L_\h + n^2\gamma) + n (L_\g + n^2\sigma) \, D \, (J + C) \big) \\
 \subset  \softO\!\big( n (L_\h + L_\g + n^2(\gamma + \sigma)) (C E + D J + D C) \big)
\end{multline*}
operations in $k$.
 \end{proof}

\begin{algorithm}[h] 	 
\caption{$\mathsf{Solve\_h\_circ\_g}(\h, \g)$}
\label{alg:final}

\begin{flushleft}
{\bf \texttt{Input:}} polynomials $\h = (h_1,\dots,h_n) \in k[Y_1,\dots,Y_n]^n$ and polynomials $\g = (g_1,\dots,g_n) \in k[X_1,\dots,X_n]^n$ 

{\bf \texttt{Output:}}  a geometric resolution $(P(U), (W_1(U),\dots, W_n(U)), \lambda)$ of regular points of $\h(\g(\X)) = 0$. 
\end{flushleft}

\begin{enumerate}
\item[{\small 1:}]$(q_h(T), v_1(T),\dots,v_n(T), \mu) \gets$ geometric resolution of $V(\h)$ 

\hfill{\(\triangleright\) using {\sf Homotopy\_Nonsingluar} and then {\sf RUR\_to\_GR} }

\item[{\small 2:}] \(C \gets \deg(q_h)\)

\hfill{\(\triangleright\) \(C \le \deg(h_1) \cdots \deg(h_n)\)}
\item[{\small 3:}] \((Q(U, T), V_1(U, T), \dots, V_n(U, T), \lambda) \gets {\sf Parametric}(\g, \v, C)\)
\item[{\small 4:}]\(\scrR \gets\) a geometric resolution of \(Q(U, T) = q_h(t) = 0\) \label{Step:scrRR}

\hfill{\(\triangleright\) \(\scrR = \big(P_{\rm prec}(S), \upsilon(S), \tau(S), U\big)\)}

\item[{\small 5:}] \textbf{for} \(i=1, \dots, n\) \textbf{do}

\item[{\small 6:}] \quad \(W_{i, {\rm prec}}(S) \gets V_i(\upsilon(S), \tau(S))\)

\item[{\small 7:}] \textbf{end for}

\item[{\small 8:}]  \(\big(P(S), (W_1(S), \dots, W_n(S), \lambda)\big) \) 

\hfill{\(\gets\) \({\sf Remove}\big(P_{\rm prec}(S), (\W_{\rm prec}(S), \lambda)\big)\)}

\item[{\small 9:}] \textbf{return } \(\big(P(S), (W_1(S), \dots, W_n(S), \lambda)\big)\) 

\end{enumerate}
\end{algorithm}

\begin{example} \label{ex:final_ex}
Consider
\[
f_1 = X_1 + X_2 - X_1 X_2 - 1, \qquad
f_2 = X_1^2 X_2^2 + X_1 X_2 .
\]
These polynomials can be written as 
\(
f_1 = h_1(g_1, g_2)\) and \(f_2 = h_2(g_1, g_2),
\) 
where 
\[
\begin{cases}
h_1 = Y_1 - Y_2 - 1,\\
h_2 = Y_2^2 + Y_2,
\end{cases}
\qquad
\begin{cases}
g_1 = X_1 + X_2,\\
g_2 = X_1 X_2.
\end{cases}
\]
After applying {\sf Homotopy\_Nonsingular} to $(h_1, h_2)$ to obtain a RUR of the isolated points of $h_1 = h_2 = 0$, and then performing {\sf RUR\_to\_GR} on this RUR, we obtain a geometric resolution of regular points of $h_1 = h_2 = 0$:
\[
q_h(T) = T^2 + T, \quad
\begin{cases}
v_1(T) = T + 1,\\
v_2(T) = T.
\end{cases}
\]
As in Example~\ref{ex:2}, after performing {\sf Parametric} on $\g$, $\v$, and $C = 2$, we obtain the linear form \(\lambda = X_1 + 3X_2\) and 
\[
Q(U,T) = U^2 - (4T + 4)U + (3T^2 + 10T + 3), \
\begin{cases}
V_1(U,T) = \frac{3}{2} - \frac{U}{2} + \frac{3T}{2},\\
V_2(U,T) = -\frac{1}{2} + \frac{U}{2} - \frac{T}{2}.
\end{cases}
\] 
We compute a geometric resolution for \(Q(U,T) = q(T) = 0\) with
\[P(S) = S^4 - 4S^3 - S^2 + 16S - 12\] and \[\upsilon(S) = S, \tau(S) = \frac{4}{15} S^3 - \frac{9}{15} S^2 - \frac{16}{15} S - \frac{21}{15}. \]
Substituting $(U,T)=(\upsilon(S),\tau(S))$ into $V_1,V_2$ gives a geometric
resolution of the original system $f_1=f_2=0$:
\[
P(S) = S^4 - 4S^3 - S^2 + 16S - 12=0,
\] and 
\[
\begin{cases}
W_1(S)=V_1(\upsilon(S),\tau(S))
      = \frac{2}{5}S^3-\frac{9}{10}S^2-\frac{21}{10}S+\frac{18}{5},\\[4pt]
W_2(S)=V_2(\upsilon(S),\tau(S))
      = -\frac{2}{15}S^3+\frac{3}{10}S^2+\frac{31}{30}S-\frac{6}{5}.
\end{cases}
\]
Finally one can indeed verify that $P(S) = 0$ has four solutions $\{-2, 1, 2, 3\}$.   The corresponding solutions of the original system are 
\[
(X_1, X_2) = (W_1(S), W_2(S) \in \{(1,-1), (1,0), (-1,1), (0,1)\}.
\] 
\end{example}

We conclude this section with an important remark on the complexity of our algorithm. With the polynomials from Example \ref{ex:final_ex}, using the classical B\'ezout bound on the original system, the number of solutions is bounded by 
\(\deg(f_1) \cdot \deg(f_2) = 8\), whereas the bound on the number of points that our algorithm actually needs to compute is 
\(\deg(h_1) \cdot \deg(h_2) \cdot \deg(g_1) \cdot \deg(g_2) = 4\), 
which coincides exactly with the size of the solution set.

\section{Conclusions and Future Research}

We presented a probabilistic symbolic homotopy algorithm for computing the isolated regular solutions of composable polynomial systems \(\f = \h(\g)\).
By separating the outer system $\h$ from the inner map $\g$ and combining geometric resolutions with Newton-Hensel lifting, the proposed method achieves arithmetic complexity that scales with the individual degrees and straight-line program sizes of the components, rather than with the typically much larger degrees of the composed system $\f$.
This demonstrates that exploiting algebraic structure can substantially reduce the cost of symbolic polynomial system solving, particularly for systems arising from subring representations and invariant theory.

Furthermore, as discussed in Section \ref{subsec_square1}, a slight modification of the {\sf Remove} procedure allows us to compute all isolated solutions of the system $\f = 0$, and not only the regular ones. The resulting complexity remains polynomial in the size of the input and in the number of solutions, although with a higher polynomial degree due to the additional algebraic tests required to handle singular points.

Several directions remain for future work.
A first objective is to extend the approach beyond regular solutions and isolated solutions to handle  positive-dimensional components, for instance via deflation techniques or equidimensional decompositions.
Another direction is the development of sparse or multihomogeneous variants that better exploit monomial structure and supports.
It would also be valuable to investigate deterministic alternatives to some randomized steps. 
Moreover, applications to invariant and symmetric polynomial systems, polynomial optimization, and real solving suggest promising opportunities to integrate the compositional framework into broader symbolic-numeric pipelines.

For our algorithms to become practically applicable, efficient implementations are essential. Although we have already developed a first implementation of our algorithms to validate their correctness on several examples, substantial work remains to achieve high practical efficiency. The implementation of Straight-Line Programs is particularly subtle and has been the subject of significant recent work, notably~\cite{van2026towards}, as well as developments within the Mathemagix project. Developing efficient implementations and providing an open-source library constitute important directions for future work.

Finally, the proposed algorithm for solving composable polynomial systems opens several directions for applications in areas where structured nonlinear models arise naturally, such as chemical reaction networks, robotics and geometric control via nonlinear feature liftings of configuration space, and systems biology, where hierarchical interaction structures induce layered polynomial dynamics. In addition, multivariate public-key cryptosystems rely on the computational hardness of solving systems of composable multivariate polynomial equations over finite fields.


\begin{acks}
The author thanks the anonymous referees for their valuable comments and suggestions, which helped improve the quality of the paper.
\end{acks}

\balance
\bibliographystyle{ACM-Reference-Format}
\bibliography{sample-base}

@String{JACM = "J. ACM" }

@String{BIT = "{BIT}" }

@String{Computing = "Computing" }

@String{Computer = "{IEEE} Computer" }

@String{Springer = "Springer-Verlag" }

@article{schwartz1980fast,
  title={Fast probabilistic algorithms for verification of polynomial identities},
  author={Schwartz, Jacob T},
  journal={Journal of the ACM (JACM)},
  volume={27},
  number={4},
  pages={701--717},
  year={1980},
  publisher={ACM New York, NY, USA}
}

@article{cantor1991fast,
  title={On fast multiplication of polynomials over arbitrary algebras},
  author={Cantor, David G and Kaltofen, Erich},
  journal={Acta Informatica},
  volume={28},
  number={7},
  pages={693--701},
  year={1991},
  publisher={Citeseer}
}

@article{harvey2022polynomial,
  title={Polynomial multiplication over finite fields in time $O(n \, \text{log}\, n)$},
  author={Harvey, David and Van Der Hoeven, Joris},
  journal={Journal of the ACM (JACM)},
  volume={69},
  number={2},
  pages={1--40},
  year={2022},
  publisher={ACM New York, NY}
}

@article{schonhage1977schnelle,
  title={Schnelle Multiplikation von polynomen {\"u}ber K{\"o}rpern der Charakteristik 2},
  author={Sch{\"o}nhage, Arnold},
  journal={Acta Informatica},
  volume={7},
  number={4},
  pages={395--398},
  year={1977},
  publisher={Springer}
}

@article{schonhage1971schnelle,
  title={Schnelle Multiplikation gro\ss{}er Zahlen},
  author={Sch{\"o}nhage, Arnold and Strassen, Volker},
  journal={Computing},
  volume={7},
  number={3},
  pages={281--292},
  year={1971},
  publisher={Springer}
}

@article{Schost03,
  author    = {Éric Schost},
  title     = {Computing parametric geometric resolutions},
  journal   = {Applicable Algebra in Engineering, Communication and Computing},
  volume    = {13},
  number    = {5},
  pages     = {349--393},
  year      = {2003},
  doi       = {10.1007/s00200-002-0109-x}
}

@inproceedings{vu2025computing,
  title={Computing polynomial representation in subrings of multivariate polynomial rings},
  author={Vu, Thi Xuan},
  booktitle={Proceedings of the 2025 International Symposium on Symbolic and Algebraic Computation},
  pages={284--292},
  year={2025}
}

@article{jeronimo2009deformation,
  title={Deformation techniques for sparse systems},
  author={Jeronimo, Gabriela and Matera, Guillermo and Solerno, Pablo and Waissbein, Ariel},
  journal={Foundations of Computational Mathematics},
  volume={9},
  number={1},
  pages={1--50},
  year={2009},
  publisher={Springer}
}

@article{GiustiLecerfSalvy2001,
  title={A Gr{\"o}bner free alternative for polynomial system solving},
  author={Giusti, Marc and Lecerf, Gr{\'e}goire and Salvy, Bruno},
  journal={Journal of complexity},
  volume={17},
  number={1},
  pages={154--211},
  year={2001},
  doi= {https://doi.org/10.1006/jcom.2000.0571},
  publisher={Elsevier}
}

@article{van2026towards,
  title={Towards a library for straight-line programs},
  author={van der Hoeven, Joris and Lecerf, Gr{\'e}goire},
  journal={Applicable Algebra in Engineering, Communication and Computing},
  volume={37},
  number={2},
  pages={331--387},
  year={2026},
  publisher={Springer}
}

@phdthesis{vu2020homotopy,
  title={Homotopy algorithms for solving structured determinantal systems},
  author={Vu, Thi Xuan},
  year={2020},
  school={Sorbonne Universit{\'e} (France); University of Waterloo (Canada)}
}

@article{el2018bit,
  title={Bit complexity for multi-homogeneous polynomial system solving-application to polynomial minimization},
  author={Safey El Din, Mohab  and Schost, {\'E}ric},
  journal={Journal of Symbolic Computation},
  volume={87},
  pages={176--206},
  year={2018},
  publisher={Elsevier}
}

@article{giusti1997lower,
  title={Lower bounds for Diophantine approximations},
  author={Giusti, Marc and Heintz, J and H{\"a}gele, K and Morais, Jose E and Pardo, LM and Montana, JL},
  journal={Journal of Pure and Applied Algebra},
  volume={117},
  pages={277--317},
  year={1997},
  publisher={Elsevier}
}

@inproceedings{giusti1995polynomial,
  title={When polynomial equation systems can be “solved” fast?},
  author={Giusti, Marc and Heintz, Joos and Morais, Jose Enrique and Pardo, Luis Miguel},
  booktitle={International Symposium on Applied Algebra, Algebraic Algorithms, and Error-Correcting Codes},
  pages={205--231},
  year={1995},
  organization={Springer}
}

@article{kaltofen1989factorization,
  title={Factorization of polynomials given by straight-line programs.},
  author={Kaltofen, Erich},
  journal={Adv. Comput. Res.},
  volume={5},
  pages={375--412},
  year={1989}
}

@article{giusti1998straight,
  title={Straight-line programs in geometric elimination theory},
  author={Giusti, Marc and Heintz, Joos and Morais, Jose Enrique and Morgenstem, Jacques and Pardo, Luis Miguel},
  journal={Journal of pure and applied algebra},
  volume={124},
  number={1-3},
  pages={101--146},
  year={1998},
  publisher={Elsevier}
}

@inproceedings{heintz1981absolute,
  title={Absolute primality of polynomials is decidable in random polynomial time in the number of variables},
  author={Heintz, Joos and Sieveking, Malte},
  booktitle={International Colloquium on Automata, Languages, and Programming},
  pages={16--28},
  year={1981},
  organization={Springer}
}

@article{kaltofen1988greatest,
  title={Greatest common divisors of polynomials given by straight-line programs},
  author={Kaltofen, Erich},
  journal={Journal of the ACM (JACM)},
  volume={35},
  number={1},
  pages={231--264},
  year={1988},
  publisher={ACM New York, NY, USA}
}

@incollection{krick1996computational,
  title        = {A computational method for diophantine approximation},
 author={Krick, Teresa and Pardo, Luis M},
  booktitle    = {Algorithms in Algebraic Geometry and Applications},
  editor       = {Faugère, Jean-Charles and Galligo, André},
  pages        = {193--253},
  year         = {1996},
  publisher    = {Springer},
  address      = {Basel},
  series       = {Progress in Mathematics},
  volume       = {143}
}

@article{hauenstein2021solving,
  title={Solving determinantal systems using homotopy techniques},
  author={Hauenstein, Jon D and El Din, Mohab Safey and Schost, {\'E}ric and Vu, Thi Xuan},
  journal={Journal of Symbolic Computation},
  volume={104},
  pages={754--804},
  year={2021},
  publisher={Elsevier}
}

@article{SafeySchost2017,
author = {Din, Mohab Safey El and Schost, \'{E}ric},
title = {A nearly optimal algorithm for deciding connectivity queries in smooth and nounded real algebraic sets},
year = {2017},
publisher = {Association for Computing Machinery},
address = {New York, NY, USA},
volume = {63},
number = {6},
issn = {0004-5411},
url = {https://doi.org/10.1145/2996450},
doi = {10.1145/2996450},
journal = {Journal of the ACM},
articleno = {48},
numpages = {37},
}

@article{rouillier2000finding,
  title={Finding at least one point in each connected component of a real algebraic set defined by a single equation},
  author={Rouillier, Fabrice and Roy, Marie-Francoise and Safey El Din, Mohab},
  journal={Journal of Complexity},
  volume={16},
  number={4},
  pages={716--750},
  year={2000},
  publisher={Elsevier}
}

@article{rouillier1999solving,
  title={Solving zero-dimensional systems through the rational univariate representation},
  author={Rouillier, Fabrice},
  journal={Applicable Algebra in Engineering, Communication and Computing},
  volume={9},
  number={5},
  pages={433--461},
  year={1999},
  publisher={Springer}
}

@book{vonzurGathenGerhard2013,
  title={Modern computer algebra},
  author={Von Zur Gathen, Joachim and Gerhard, J{\"u}rgen},
  year={2013},
  publisher={Cambridge university press}
}

@article{chevalley1955invariants,
  title={Invariants of finite groups generated by reflections},
  author={Chevalley, Claude},
  journal={American Journal of Mathematics},
  volume={77},
  number={4},
  pages={778--782},
  year={1955},
  publisher={JSTOR}
}

@book{derksen2002computational,
  author    = {Harm Derksen and Gregor Kemper},
  title     = {Computational Invariant Theory},
  series    = {Encyclopaedia of Mathematical Sciences},
  volume    = {130},
  publisher = {Springer},
  year      = {2002}
}

@incollection{alonso1996zeros,
  title={Zeros, multiplicities, and idempotents for zero-dimensional systems},
  author={Alonso, María Emilia and Becker, Eberhard and Roy,  Marie-Fran{\c{c}}ois  and W{\"o}rmann, Thorsten},
  booktitle={Algorithms in algebraic geometry and applications},
  pages={1--15},
  year={1996},
  publisher={Springer}
}

@book{sturmfels1993algorithms,
  title={Algorithms in Invariant Theory},
  author={Sturmfels, Bernd},
  year={1993},
  publisher={Springer-Verlag},
  series={Texts and Monographs in Symbolic Computation},
  address={Vienna, Austria}
}

@inproceedings{basu2017efficient,
  title={Efficient algorithms for computing the euler-poincar{\'e} characteristic of symmetric semi-algebraic sets},
  author={Basu, Saugata and Riener, Cordian},
  booktitle={Ordered Algebraic Structures and Related Topics: International Conference on Ordered Algebraic Structures and Related Topics, October 12--16, 2015, Centre International de Rencontres Math{\'e}matiques (CIRM), Luminy, France},
  volume={697},
  pages={53--81},
  year={2017},
  organization={American Mathematical Soc. Providence, Rhode Island}
}

@article{riener2013exploiting,
  title={Exploiting symmetries in SDP-relaxations for polynomial optimization},
  author={Riener, Cordian and Theobald, Thorsten and Andr{\'e}n, Lina Jansson and Lasserre, Jean B},
  journal={Mathematics of Operations Research},
  volume={38},
  number={1},
  pages={122--141},
  year={2013},
  publisher={INFORMS}
}

@inproceedings{labahn2023faster,
  title={Faster real root decision algorithm for symmetric polynomials},
  author={Labahn, George and Riener, Cordian and Safey El Din, Mohab and Schost, {\'E}ric and Vu, Thi Xuan},
  booktitle={Proceedings of the 2023 International Symposium on Symbolic and Algebraic Computation},
  pages={452--460},
  year={2023}
}

@article{riener2025deciding,
  title={Deciding connectivity in symmetric semi-algebraic sets},
  author={Riener, Cordian and Schabert, Robin and Vu, Thi Xuan},
  journal={arXiv preprint arXiv:2503.12275},
  year={2025}
}

@inproceedings{riener2024connectivity,
  title={Connectivity in symmetric semi-algebraic sets},
  author={Riener, Cordian and Schabert, Robin and Vu, Thi Xuan},
  booktitle={Proceedings of the 2024 International Symposium on Symbolic and Algebraic Computation},
  pages={162--169},
  year={2024}
}

@inproceedings{vu2022computing,
  title={Computing critical points for algebraic systems defined by hyperoctahedral invariant polynomials},
  author={Vu, Thi Xuan},
  booktitle={Proceedings of the 2022 International Symposium on Symbolic and Algebraic Computation},
  pages={167--175},
  year={2022}
}

@article{faugere2023computing,
  title={Computing critical points for invariant algebraic systems},
  author={Faugère, Jean-Charles and Labahn, George and El Din, Mohab Safey and Schost, {\'E}ric and Vu, Thi Xuan},
  journal={Journal of Symbolic Computation},
  volume={116},
  pages={365--399},
  year={2023},
  publisher={Elsevier}
}

@book{lehrer2009unitary,
  title={Unitary reflection groups},
  author={Lehrer, Gustav I and Taylor, Donald E},
  volume={20},
  year={2009},
  publisher={Cambridge University Press}
}

@book{sturmfels2002solving,
  author    = {Bernd Sturmfels},
  title     = {Solving Systems of Polynomial Equations},
  series    = {CBMS Regional Conference Series in Mathematics},
  volume    = {97},
  publisher = {American Mathematical Society},
  year      = {2002}
}

@book{garey2002computers,
  title={Computers and intractability},
  author={Garey, Michael R and Johnson, David S},
  volume={29},
  year={2002},
  publisher={wh freeman New York}
}

@article{fraenkel1979complexity,
  title={Complexity of problems in games, graphs and algebraic equations},
  author={Fraenkel, Aviezri S and Yesha, Yaacov},
  journal={Discrete Applied Mathematics},
  volume={1},
  number={1-2},
  pages={15--30},
  year={1979},
  publisher={Elsevier}
}


\end{document}